\theoremstyle{plain}
\newtheorem{thm}{Theorem}
  \theoremstyle{remark}
  \theoremstyle{definition}
  \theoremstyle{plain}
  \newtheorem{lemma}{Lemma}
  \theoremstyle{remark}
  \theoremstyle{cor}
\renewcommand{\NG}[1]{{\rm NG}(#1)}
\newcommand{\LG}[1]{{\rm LG}(#1)}
\newcommand{\CNG}[2][]{{\rm CNG}_{#1}(#2)}
\newcommand{\CLG}[2][]{{\rm CLG}_{#1}(#2)}
\newcommand{\GG}[2][k]{{\rm GG}_{#1}(#2)}
\newcommand{\udot}[1]{%
    \tikz[baseline=(todotted.base)]{
        \node[inner sep=1pt,outer sep=0pt] (todotted) {#1};
        \draw[dotted, thick] (todotted.south west) -- (todotted.south east);
    }%
}%
\newcommand{\udash}[1]{%
    \tikz[baseline=(todotted.base)]{
        \node[inner sep=1pt,outer sep=0pt] (todotted) {#1};
        \draw[dashed] (todotted.south west) -- (todotted.south east);
    }%
}%
\newcommand{\uline}[1]{%
    \tikz[baseline=(todotted.base)]{
        \node[inner sep=1pt,outer sep=0pt] (todotted) {#1};
        \draw (todotted.south west) -- (todotted.south east);
    }%
}%
\newcommand{\red}[1]{{\color{red}\uline{$#1$}}}
\newcommand{\blue}[1]{{\color{blue}\udash{$#1$}}}
\newcommand{\purple}[1]{{\color{purple}\udot{$#1$}}}
\title{Clique-Based Lower Bounds for \\ Parsing Tree-Adjoining Grammars}
\author{Karl Bringmann\footnote{Max Planck Institute for Informatics, Saarland Informatics Campus, Saarbr\"ucken, Germany,
\texttt{kbringma@mpi-inf.mpg.de}}
\and 
Philip Wellnitz\footnote{Saarland University, Saarland Informatics Campus, Saarbr\"ucken, Germany, \texttt{s8phwell@stud.uni-saarland.de}}}
\date{}
\begin{document}
\maketitle

\begin{abstract}
    Tree-adjoining grammars are a generalization of context-free grammars that are well suited to model human languages and are thus popular in computational linguistics.
    In the tree-adjoining grammar recognition problem, given a grammar $\Gamma$ and a string $s$ of length $n$, the task is to decide whether $s$ can be obtained from $\Gamma$.
    Rajasekaran and Yooseph's parser (JCSS'98) solves this problem in time $O(n^{2\omega})$, where $\omega < 2.373$ is the matrix multiplication exponent. The best algorithms avoiding fast matrix multiplication take time $O(n^6)$.

    The first evidence for hardness was given by Satta (J.~Comp.~Linguist.'94): For a more general parsing problem, any algorithm that avoids fast matrix multiplication and is significantly faster than $O(|\Gamma|\, n^6)$ in the case of $| \Gamma | = \Theta(n^{12})$ would imply a breakthrough for Boolean matrix multiplication.

    Following an approach by Abboud et al.~(FOCS'15) for context-free grammar recognition, in this paper we resolve many of the disadvantages of the previous lower bound. We show that, even on constant-size grammars, any improvement on Rajasekaran and Yooseph's parser would imply a breakthrough for the $k$-Clique problem.
    This establishes tree-adjoining grammar parsing as a practically relevant problem with the unusual running time of $n^{2 \omega}$, up to lower order factors.
\end{abstract}

\section{Introduction}

Introduced in~\cite{joshi1985tree,joshi1975tree}, tree-adjoining grammars (TAGs) are a system to manipulate certain trees to arrive at strings, see Section~\ref{sec:prelim} for a definition.
TAGs are more powerful than context-free grammars, capturing various phenomena of human languages which require more formal power; in particular TAGs have an ``extended domain of locality'' as they allow ``long-distance dependencies''~\cite{joshi1997tree}.
These properties, and the fact that TAGs are efficiently parsable~\cite{VSJ85}, make them highly desirable in the field of computer linguistics. This is illustrated by the large literature on variants of TAGs (see, e.g.,~\cite{demberg2013incremental,resnik1992probabilistic,shieber1990synchronous,vijay1988feature}), their formal language properties (see, e.g.,~\cite{joshi1997tree,VSJ85}), as well as practical applications (see, e.g.,~\cite{abeille1988parsing,forbes2003d,stone1997sentence,uemura1999tree}), including the XTAG project which developed a tree-adjoining grammar for (a large fraction of) the English language~\cite{doran1994xtag}.
In fact, TAGs are so fundamental to computer linguistics that there is a biannual meeting called ``International Workshop on Tree-Adjoining Grammars and Related Formalisms''~\cite{DBLP:conf/tag/2016}, and they are part of their undergraduate curriculum (at least at Saarland University).

The prime algorithmic problems on TAGs are \emph{parsing} and \emph{recognition}. In the recognition problem, given a TAG $\Gamma$ and a string $s$ of length $n$, the task is to decide whether $\Gamma$ can generate $s$.
The parsing problem is an extended variant where in case $\Gamma$ can generate $s$ we should also output a sequence of derivations generating $s$.
The first TAG parsers ran in time\footnote{In most running time bounds we ignore the dependence on the grammar size, as we are mostly interested in constant-size grammars in this paper.} $O(n^6)$~\cite{schabes1988earley,VSJ85}, which was improved by
Rajasekaran and Yooseph~\cite{rajasekaran1998tal} to $O(n^{2 \omega})$, where $\omega < 2.373$ is the exponent of (Boolean) matrix multiplication.

A limited explanation for the complexity of TAG parsing was given by Satta~\cite{S94}, who designed a reduction from Boolean matrix multiplication to TAG parsing, showing that any TAG parser running faster than $O(|\Gamma| \, n^6)$ on grammars of size $|\Gamma| = \Theta(n^{12})$ yields a Boolean matrix multiplication algorithm running faster than $O(n^3)$. This result has several shortcomings: (1) It holds only for a more general parsing problem, where we need to determine for each substring of the given string $s$ whether it can be generated from $\Gamma$. (2) It gives a matching lower bound only in the unusual case of $|\Gamma| = \Theta(n^{12})$, so that it cannot exclude time, e.g., $O(|\Gamma|^2 n^4)$. (3) It gives matching bounds only restricted to \emph{combinatorial} algorithms, i.e., algorithms that avoid fast matrix multiplication\footnote{The notion of ``combinatorial algorithms'' is informal, intuitively meaning that we forbid unpractical algorithms such as fast matrix multiplication. It is an open research problem to find a reasonable formal definition.}. Thus, so far there is no satisfying explanation of the complexity of TAG parsing.

\paragraph*{Context-free grammars}
The classic problem of parsing \emph{context-free grammars}, which has important applications in programming languages, was in a very similar situation as TAG parsing until very recently. Parsers in time $O(n^3)$ were known since the 60s~\cite{cocke,earley1970efficient,kasami,younger1967recognition}. In a breakthrough, Valiant~\cite{valiant1975general} improved this to $O(n^\omega)$. Finally, a reduction from Boolean matrix multiplication due to Lee~\cite{lee2002fast} showed a matching lower bound for combinatorial algorithms for a more general parsing problem in the case that the grammar size is $\Theta(n^6)$.

Abboud et al.~\cite{ABV15} gave the first satisfying explanation for the complexity of context-free parsing, by designing a reduction from the classic $k$-Clique problem, which asks whether there are $k$ pairwise adjacent vertices in a given graph~$G$. For this problem, for any fixed $k$ the trivial running time of $O(n^k)$ can be improved to $O(n^{\omega k /3})$ for any $k$ divisible by~3~\cite{nevsetvril1985complexity} (see~\cite{eisenbrand2004complexity} for the case of $k$ not divisible by~3). The fastest combinatorial algorithm runs in time $O(n^k / \log^k n)$~\cite{vassilevska2009efficient}. The \emph{$k$-Clique hypothesis} states that both running times are essentially optimal, specifically that $k$-Clique has no $O(n^{(\omega/3 - \varepsilon)k})$ algorithm and no combinatorial $O(n^{(1-\varepsilon)k})$ algorithm for any $k \ge 3, \varepsilon > 0$.
The main result of Abboud et al.~\cite{ABV15} is a reduction from the $k$-Clique problem to context-free grammar recognition on a specific, constant-size grammar $\Gamma$, showing that any $O(n^{\omega - \varepsilon})$ algorithm or any combinatorial $O(n^{3-\varepsilon})$ algorithm for context-free grammar recognition would break the $k$-Clique hypothesis. This matching conditional lower bound removes all disadvantages of Lee's lower bound at the cost of introducing the $k$-Clique hypothesis, see~\cite{ABV15} for further discussions.

\paragraph*{Our contribution}
We extend the approach by Abboud et al.\ to the more complex setting of TAGs. Specifically, we design a reduction from the $6k$-Clique problem to TAG recognition:

\begin{thm}
    There is a tree-adjoining grammar $\Gamma$ of constant size such that if we can decide in time $T(n)$ whether a given string of length $n$ can be generated from $\Gamma$, then $6k$-Clique can be solved in time $O\big(T(n^{k + 1} \log n)\big)$, for any fixed $k \ge 1$.  This reduction is combinatorial.
\end{thm}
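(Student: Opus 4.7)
The plan is to adapt the reduction of Abboud et al.~for context-free grammars to the TAG setting, exploiting the extra expressive power of adjunction (which lets TAGs generate multi-way matching languages such as $\{a_1^n \cdots a_m^n\}$) to reach the stronger $n^{2\omega}$ target starting from $6k$-Clique. The first step is a reformulation: given $G$ on $n$ vertices, partition the sought $6k$-clique into six groups of $k$ vertices each, so that a $6k$-clique in $G$ is equivalent to finding six $k$-cliques $C_1,\ldots,C_6$ in $G$ such that every pair $C_i,C_j$ with $i\neq j$ is joined by all $k^2$ cross-edges. Disjointness of the six groups can be forced by a standard preprocessing (e.g.\ replacing $G$ by the graph on $V\times[6]$ in which $(u,i)$ and $(v,j)$ are adjacent iff $i\neq j$ and $uv\in E$). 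Since $G$ has at most $\binom{n}{k}\le n^k$ many $k$-cliques, this becomes a $6$-clique problem on an auxiliary super-graph whose vertices are $k$-cliques of $G$.

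Next I would encode this super-graph as a string $s=s(G)$ of length $\Theta(n^{k+1}\log n)$ over a constant-size alphabet. The string is organized into $\Theta(n\cdot n^k)$ blocks of length $\Theta(\log n)$, one per pair $(v,C)$ with $v\in V(G)$ and $C$ a candidate $k$-tuple: each block explicitly writes the labels of $v$ and of the $k$ vertices of $C$, and carries a Boolean indicator of whether $v$ is adjacent in $G$ to every vertex of $C$; an auxiliary section of blocks records which $k$-tuples are actual $k$-cliques. Then I would design a constant-size TAG $\Gamma$ whose accepting derivations on $s$ nondeterministically pick six super-vertex encodings $C_1,\ldots,C_6$ (by choosing six blocks in $s$) and then use adjunction to verify (i) that each $C_i$ carries the ``is a $k$-clique'' marker, and (ii) that for every ordered pair $(i,j)$ with $i\neq j$ and every $v\in C_j$, the block indexed by $(v,C_i)$ carries the positive adjacency flag.

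The main obstacle is engineering these verifications with a grammar of \emph{constant} size: the six super-vertex encodings must stay consistent across all $\binom{6}{2}=15$ pairwise cross-edge checks---that is, across $30$ different positions in the string---even though each encoding has length $\Theta(k\log n)$ which grows with~$k$. This is precisely where TAGs are strictly stronger than CFGs: a single spine of adjunctions can propagate a copy of a super-vertex encoding past many intervening blocks, in the same way that $\{a^nb^nc^nd^n\}$ is generated by a constant-size TAG but by no CFG. I would build a fixed number of interlocking elementary trees---one auxiliary tree per super-vertex---whose foot-nodes line up so that the adjunction cascade forces each $C_i$ to appear identically in every block that refers to it, after which the derivation ``reads off'' the adjacency and clique flags and accepts only when they are all positive. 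Once $\Gamma$ and $s$ are fixed, correctness follows by matching accepting derivations bijectively to $6k$-cliques in $G$, and the running-time claim is immediate: any $T(N)$-time recognizer for $\Gamma$ decides $6k$-Clique in time $O(T(n^{k+1}\log n))$. Since the construction of $s$ uses only counting and string manipulation, the reduction is combinatorial.
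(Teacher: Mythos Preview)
Your high-level outline has the right shape, but there is a genuine gap at the heart of the construction: you have not identified a decomposition of the $\binom{6}{2}=15$ pairwise checks that actually fits the expressive power of a constant-size TAG, and your proposed mechanism for consistency does not obviously work. A TAG spine grows a string at \emph{four} positions simultaneously (this is exactly why $\{a^nb^nc^nd^n\}$ is a TAL), not at six or thirty. So ``a single spine of adjunctions'' can equate at most four copies of a $k$-clique encoding, and ``one auxiliary tree per super-vertex---whose foot-nodes line up'' does not by itself explain how six independently chosen encodings get threaded through all $15$ pair tests without the grammar size depending on $k$ or on the number of positions. In the paper the missing combinatorial idea is that the edge set of $K_6$ decomposes into three copies of $K_4$ minus an edge (``almost-$4$-cliques''), and each almost-$4$-clique on $\{C_a,C_b,C_c,C_d\}$ can be verified by two \emph{claw} tests (is $C_a$ fully joined to $C_b,C_c,C_d$?; is $C_d$ fully joined to $C_a,C_b,C_c$?). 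A single claw test uses precisely the four TAG positions: one position carries node-gadgets of the central clique, the other three carry neighborhood-lists of the three leaf cliques, and an equality program over $\{0,1\}$ forces the central vertex code to occur as a substring in all three lists.

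The second missing piece is how the \emph{same} $C_i$ is forced to appear in two different spines (e.g.\ $C_1$ participates both in the $\{1,3,4,6\}$ and in the $\{1,2,5,6\}$ almost-$4$-clique). In the paper this is not done by the grammar at all; it is encoded in the \emph{string}. Each $k$-clique $C$ contributes a block of the form $|\;\mathrm{CNG}(C)\;\S\;\mathrm{CLG}(C)^R\;l_i\;r_i\;\mathrm{CLG}(C)\;\S\;\mathrm{CLG}(C)^R\;|$, and the grammar contains one non-normal auxiliary tree with \emph{two} adjunction sites that launches three separate programs $P(1,3,4,6)$, $P(1,2,5,6)$, $P(2,3,4,5)$. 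Unique marker pairs $l_ir_i$ together with the $|$-delimiters pin each program's four positions to adjacent halves of the same block, so the two programs that mention $C_1$ are forced to read the same physical gadget. Your encoding via $(v,C)$-blocks with Boolean flags does not supply this anchoring, and without it there is nothing preventing the six spines from selecting six unrelated $k$-cliques in each of their appearances.
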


Via this reduction, any $O(n^{2\omega-\varepsilon})$ algorithm for TAG recognition would prove that $6k$\nobreakdash-Clique is in time $\tilde O(n^{(2\omega-\varepsilon) (k + 1)}) = O(n^{(\omega/3-\varepsilon/9)6 k})$, for sufficiently large\footnote{For this and the next statement it suffices to set $k > 18/\varepsilon$.} $k$.
Furthermore, any combinatorial $O(n^{6-\varepsilon})$ algorithm for TAG recognition would yield a combinatorial algorithm for $6k$-Clique in time $\tilde O(n^{(6-\varepsilon)(k+1)}) = O(n^{(1-\varepsilon/9) 6k})$, for sufficiently large $k$. As both implications would violate the $6k$-Clique conjecture, we obtain tight conditional lower bounds for TAG recognition. As our result (1) works directly for TAG recognition instead of a more general parsing problem, (2) holds for constant size grammars, and (3) does not need the restriction to combinatorial algorithms, it overcomes all shortcomings of the previous lower bound based on Boolean matrix multiplication, at the cost of using the well-established $k$-Clique hypothesis, which has also been used in~\cite{ABV15,backurs_et_al,backurs2016improving,bringmann2016dichotomy,DBLP:conf/cpm/Chang16}.

We thus establish TAG parsing as a practically relevant problem with the quite unusual running time of $n^{2\omega}$, up to lower order factors.
This is surprising, as the authors are aware of only one other problem with a (conjectured or conditional) optimal running time of $n^{2 \omega \pm o(1)}$, namely 6-Clique.

\paragraph*{Techniques}
The essential difference of tree-adjoining and context-free grammars is that the former can grow strings at four positions, see Figure~\ref{fg:normaltree}. Writing a vertex $v_1$ in one position of the string, and writing the neighborhoods of vertices $v_2,v_3,v_4$ at other positions in the string, a simple tree-adjoining grammar can test whether $v_1$ is adjacent to $v_2,v_3$, and $v_4$. Extending this construction, for $k$-cliques $C_1,C_2,C_3,C_4$ we can test whether $C_1 \cup C_2, C_1 \cup C_3$, and $C_1 \cup C_4$ form $2k$-cliques. Using two permutations of this test, we ensure that $C_1 \cup C_2 \cup C_3 \cup C_4$ forms an almost-$4k$-clique, i.e., only the edges $C_3 \times C_4$ might be missing (in Figure~\ref{fg:claw} below this situation is depicted for cliques $C_2,C_5,C_1,C_6$ instead of $C_1,C_2,C_3,C_4$).
Finally, we use that a $6k$-clique can be decomposed into 3 almost-$4k$-cliques, see Figure~\ref{fg:decomposealmostfourclique}.

In the constructed string we essentially just enumerate 6 times all $k$-cliques of the given graph $G$, as well as their neighborhoods, with appropriate padding symbols (see Section~\ref{sec:string}).
We try to make the constructed tree-adjoining grammar as easily accessible as possible by defining a certain programming language realized by these grammars, and phrasing our grammar in this language, which yields subroutines with an intuitive meaning (see Section~\ref{sec:programming}).

\section{Preliminaries on tree-adjoining grammars} \label{sec:prelim}

In this section we define tree-adjoining grammars and give examples. Fix a set $T$ of terminals and a set $N$ of non-terminals. In the following, conceptually we partition the nodes of any tree into its \emph{leaves}, the \emph{root}, and the remaining \emph{inner nodes}.
An \emph{initial tree} is a rooted tree where
\begin{itemize}
    \item the root and each inner node is labeled with a non-terminal,
    \item each leaf is labeled with a terminal, and
    \item each inner node can be marked for adjunction.
\end{itemize}
See Figure~\ref{fg:initialauxiliary} for an example; nodes marked for adjunction are annotated by a rectangle. An \emph{auxiliary tree} is a rooted tree where
\begin{itemize}
    \item the root and each inner node is labeled with a non-terminal,
    \item \emph{exactly one leaf, called the \emph{foot node}, is labeled with the same non-terminal as the root},
    \item each remaining leaf is labeled with a terminal, and
    \item each inner node can be marked for adjunction.
\end{itemize}
Initial trees are the starting points for derivations of the tree-adjoining grammar. These trees are then extended by repeatedly replacing nodes marked for adjunction by auxiliary trees. Formally, given an initial or auxiliary tree $t$ that contains at least one inner node $v$
marked for adjunction and given an auxiliary tree $a$ whose root $r$ has the same label as
$v$, we can combine these trees with the following operation called {\em adjunction}, see Figure~\ref{fg:adj} for an example.
\begin{enumerate}
    \item Replace $a$'s foot node by the subtree rooted at $v$.
    \item Replace the node $v$ with the tree obtained from the last step.
\end{enumerate}
Note that these steps make sense, since $r$ and $v$ have the same label.
Note that adjunction does not change the number leaves labeled with a non-terminal symbol,
i.e., an initial tree will stay an initial tree and an auxiliary tree will stay an auxiliary tree.

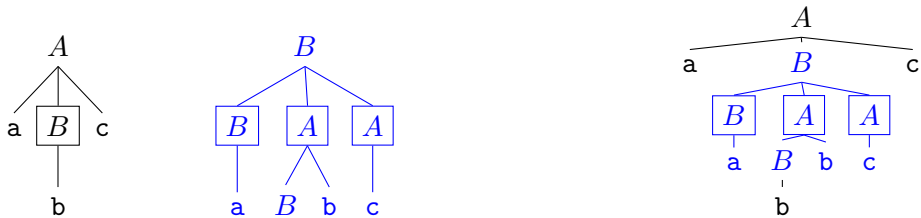
\begin{figure}[ht]
    \begin{subfigure}[b]{.47\textwidth}
        \begin{center}
            \begin{tikzpicture}
                \Tree [
                .$A$ {\tt a} [
                .\node[draw]{$B$}; {\tt b}
                ] {\tt c}
                ]
            \end{tikzpicture}
            \quad\quad\quad
            \begin{tikzpicture}\color{blue}
                \Tree [
                .$B$ [ .\node[draw]{$B$}; [ .{\tt a} ] ] [ .\node[draw]{$A$}; [ .$B$ ] {\tt b} ]
                [ .\node[draw]{$A$}; {\tt c} ]
                ]
            \end{tikzpicture}
        \end{center}
        \caption{An initial tree (left) and an auxiliary tree (right); the internal nodes labeled $A$ and
        $B$ are marked for adjunction.}
        \label{fg:initialauxiliary}
    \end{subfigure}
    ~$\quad$~
    \begin{subfigure}[b]{.47\textwidth}
        \begin{center}
            \begin{tikzpicture}\color{blue}
                \tikzset{level 1/.style={level distance=17pt}}
                \tikzset{level 2/.style={level distance=20pt}}
                \tikzset{level 3/.style={level distance=17pt}}
                \tikzset{level 4/.style={level distance=17pt}}
                \Tree [
                .\color{black}$A$ \edge[color=black]; {\color{black}\tt a} \edge[color=black]; [
                .$B$ [ .\node[draw]{$B$}; [ .{\tt a} ] ]
                [ .\node[draw]{$A$}; [ .$B$ \edge[color=black]; [
                .\color{black}{\tt b} ]] {\tt b} ]
                [ .\node[draw]{$A$}; {\tt c} ]
                ] \edge[color=black]; {\color{black}\tt c}
                ]
            \end{tikzpicture}
        \end{center}
        \caption{Resulting tree after adjoining the auxiliary tree into the initial tree.\\}
    \end{subfigure}
    \caption{The basic building blocks and operation of tree-adjoining grammars.}\label{fg:adj}
\end{figure}

A \emph{tree-adjoining grammar} is now defined as a tuple $\Gamma = (I, A, T, N)$ where
\begin{itemize}
    \item $I$ is a finite set of initial trees and
    \item $A$ is a finite set of auxiliary trees,
\end{itemize}
using the same terminals $T$ and non-terminals $N$ as labels. The set $D$ of \emph{derived trees} of $\Gamma$ consists of all trees that can be generated by starting with an initial tree in $I$ and repeatedly adjoining auxiliary trees in $A$. (Note that each derived tree is also an initial tree, but not necessarily in $I$.)
Finally, a string $s$ over alphabet $T$ can be \emph{generated} by $\Gamma$, if there is a derived tree $t$ in $D$ such that
\begin{itemize}
    \item $t$ contains no nodes marked for adjunction and
    \item $s$ is obtained by concatenating the labels of the leaves of $t$ from left to
        right.
\end{itemize}
The language $L(\Gamma)$ is then the set of all strings that can be generated by $\Gamma$.

\section{Encoding a graph in a string} \label{sec:string}

Given a graph $G=(V,E)$, in this section we construct a string $\GG{G}$ (the \emph{graph gadget}) that encodes its $k$-cliques, over the terminal alphabet $T = \{0,1,\$,\#,|,\S,e,l_1,\ldots,l_6,r_1,\ldots,r_6\}$ of size 19.
In the next section we then design a tree-adjoining grammar $\Gamma$ that generates $\GG{G}$ if and only if $G$ contains a $6k$-clique.
We assume that $V = \{1,\ldots,|V|\}$, and we denote the binary representation of any $v \in V$ by $\overline{v}$ and the neighborhood of $v$ by $N(v)$. For two strings $a$ and $b$, we use $a\circ b$ to denote their concatenation and $a^R$ to denote the reverse of $a$.

We start with \emph{node} and \emph{list gadgets}, encoding a vertex and its neighborhood, respectively:
\[
    \NG{v} := \$\,\overline{v}\,\$\quad\text{and}\quad
    \LG{v} := \underset{u \in  N(v)}{\bigcirc}\NG{u}\;= \underset{u \in  N(v)}{\bigcirc} \$\, \overline{u} \,\$
\]
Note that $u$ and $v$ are adjacent iff $\NG{u}$ is a substring of $\LG{v}$.

Next, we build clique versions of these gadgets, that encode a \emph{$k$-clique} $C$ and its neighborhood, respectively:
\[
    \CNG{C} := \underset{v \in  C}{\bigcirc} {\left( \#\;\NG{v}\;
    \#\right)}^k\quad\text{and}\quad
    \CLG{C} := {\left(\underset{v \in  C}{\bigcirc} \#\;\LG{v}\;
    \#\right)}^k
\]
Note that two $k$-cliques $C$ and $C'$ form a $2k$-clique
if and only if the substring of $\CNG{C}$ between the $i$-th and $(i+1)$-th symbol $\#$ is a substring of 
the substring of $\CLG{C'}$ between the $i$-th and $(i+1)$-th symbol $\#$, for all $i$. Indeed, every pair of a vertex in $C$ and a vertex in $C'$ is tested for adjacency. 

Conceptually, we split any $6k$-clique into six $k$-cliques. Thus, let $\mathcal{C}_k$ be the set of all $k$-cliques in $G$. Our final encoding of the graph is:
\begin{alignat*}{4}
    \GG{G} :=& \underset{C \in  \mathcal{C}_k}{\bigcirc}
             &&|\;\blue{\CNG{C}\;\S\;\CLG{C}^R}\;
             &&l_1\;r_1\;\red{\CLG{C}\;\S\; \CLG{C}^R}\;
             &&| \\[-.5em]
    \circ& \underset{C \in  \mathcal{C}_k}{\bigcirc}
         &&|\; \red{\CNG{C} \; \S \; \CLG{C}^R}\;
         &&l_2\;r_2\;\purple{\CLG{C}\; \S \; \CLG{C}^R}\;
         &&| \\[-.5em]
    \circ&  \underset{C \in  \mathcal{C}_k}{\bigcirc}
         &&|\; \purple{\CNG{C} \; \S\; \CLG{C}^R}\;
         &&l_3\;r_3\; \blue{\CLG{C} \; \S\; \CLG{C}^R }\;
         &&| \\[-.5em]
    \circ& \; e\\[-.5em]
    \circ&  \underset{C \in  \mathcal{C}_k}{\bigcirc}
         &&|\; \blue{\CLG{C} \; \S \; \CLG{C}^R }\;
         &&l_4\;r_4\; \purple{\CNG{C} \; \S \; \CLG{C}^R }\;
         && |\\[-.5em]
    \circ& \underset{C \in  \mathcal{C}_k}{\bigcirc}
         &&|\; \purple{ \CLG{C} \; \S \; \CLG{C}^R }\;
         &&l_5\;r_5\; \red{\CNG{C} \; \S \; \CLG{C}^R}\;
         &&| \\[-.5em]
    \circ& \underset{C \in  \mathcal{C}_k}{\bigcirc}
         &&|\; \red{\CLG{C}\; \S \; \CLG{C}^R } \;
         &&l_6\;r_6\; \blue{\CNG{C}\; \S \; \CLG{C}^R }\;
         &&|\end{alignat*}
    As we will show, there is a tree-adjoining grammar of constant size that generates
    the string $\GG[k]{G}$ iff $G$ contains a $6k$-clique. The structure of this test is depicted in Figure~\ref{fg:clique}.
    The clique-gadgets of the same
    highlighting style together allow us to test for an almost-$4k$-clique, as it is depicted in Figure~\ref{fg:decomposealmostfourclique}. The two gadgets of the same highlighting style then test for two claws of cliques, as depicted in Figure~\ref{fg:claw}.
    \begin{figure}[ht]
        \begin{subfigure}[b]{.47\textwidth}
            \begin{center}
                \begin{tikzpicture}
                    \node[draw, circle] (1) at (0,0) {$C_1$};
                    \node[draw, circle] (6) at (0,-1.2) {$C_6$};
                    \node[draw, circle] (2) at (1.2,1) {$C_2$};
                    \node[draw, circle] (5) at (1.2,-2.1) {$C_5$};
                    \node[draw, circle] (3) at (2.4,0) {$C_3$};
                    \node[draw, circle] (4) at (2.4,-1.2) {$C_4$};

                    \draw[dashed, color=blue] (1) to (3);
                    \draw[dashed, color=blue] (1) to (4);
                    \draw[dashed, color=blue] (1) to (6);
                    \draw[dashed, color=blue] (6) to (3);
                    \draw[dashed, color=blue] (6) to (4);

                    \draw[dotted, thick, color=violet] (3) to (2);
                    \draw[dotted, thick, color=violet] (3) to (4);
                    \draw[dotted, thick, color=violet] (3) to (5);
                    \draw[dotted, thick, color=violet] (4) to (2);
                    \draw[dotted, thick, color=violet] (4) to (5);

                    \draw[color=red] (2) to (1);
                    \draw[color=red] (2) to (5);
                    \draw[color=red] (2) to (6);
                    \draw[color=red] (5) to (1);
                    \draw[color=red] (5) to (6);
                \end{tikzpicture}
            \end{center}
            \caption{Each $C_i$ is a $k$-clique and there is an edge between two $k$-cliques of some
                highlighting style if the clique gadgets of that style ensure that these two cliques
            together form a $2k$-clique.}
            \label{fg:decomposealmostfourclique}
        \end{subfigure}
    ~$\quad$~
        \begin{subfigure}[b]{.47\textwidth}
            \begin{center}
                \begin{tikzpicture}
                    \node[draw, circle] (1) at (0,0) {$C_1$};
                    \node[draw, circle] (6) at (0,-1.2) {$C_6$};
                    \node[draw, circle] (2) at (1.2,1) {$C_2$};
                    \node[draw, circle] (5) at (1.2,-2.1) {$C_5$};

                    \draw (2) to (1);
                    \draw (2) to (5);
                    \draw (2) to (6);
                    \draw[dashed] (5) to (1);
                    \draw[dashed] (5) to (6);
                    \draw[dashed, bend right] (5) to (2);
                \end{tikzpicture}
            \end{center}
            \caption{We will generate an almost-$4k$-clique as in (a) by generating two claws.
            (This tests the edges $(C_1, C_6)$, $(C_2, C_5)$, and $(C_3, C_4)$ in (a) twice.)\\}\label{fg:claw}
        \end{subfigure}
        \caption{Structure of our test for $6k$-cliques.}\label{fg:clique}
    \end{figure}

    As the  graph has $n$ nodes, for any node $u$ the node and list gadgets $\NG{u}, \LG{u}$ have a length of $O(n \log n)$, and for a $k$-clique $C$ the clique neighborhood gadgets $\CNG{C}, \CLG{C}$ thus have a length of
    $O(k^2 n \log n)$. As our encoding of the graph consists
    of $O(n^k)$ clique neighborhood gadgets, the resulting string length is
    $O(k^2 n^{k+1} \log n)=O(n^{k+1}\log n)$.
    It is easy to see that it is also possible to construct all gadgets and in particular the
    encoding of a graph in linear time with respect to their length.

    \section{Programming with trees} \label{sec:programming}

    It remains to design a clique-detecting tree-adjoining grammar. To make our reduction more accessible,
    we will think of tree-adjoining grammars as a certain programming language. In the end,
    we will then present a “program” that generates (a suitable superset of) the set all strings
    that represent a graph containing a $6k$-clique.
    We start by defining programs.

    A {\em normal tree} $N$ with {\em input} $N_{In}$ and {\em output} $N_{Out}$ is an auxiliary tree where:
    \begin{itemize}
        \item the root is labeled with $N_{In}$,
        \item exactly one node is marked for adjunction, and
        \item this node lies on the path from the root to the foot node and is labeled $N_{Out}$.
    \end{itemize}
    See Figure~\ref{fg:normaltree} for an illustration.
    The special structure of a normal tree $N$ allows us to split its nodes into four categories (excluding the path from $N$'s root to its foot node): subtrees of left children of the path from $N$'s root to $N_{Out}$, subtrees of left children of the path from $N_{Out}$ to $N$'s foot node, subtrees of right children of the path from $N_{Out}$ to $N$'s foot node, and the remaining nodes (i.e., subtrees of right children of the path from $N$'s root to $N_{Out}$).
    The concatenation of all terminal symbols in $N$'s leaves from left to right can then be split into four
    parts $n_1, n_2, n_3, n_4$ where each part contains symbols from exactly one category.
    We say that the normal tree $N$ \emph{generates} the tuple $(n_1,n_2,n_3,n_4)$.
    \begin{lemma}\label{lm:4:ntc}
        Given normal trees $N$ with input $N_{In}$, output $N_{Out}$ and $M$ with input $M_{In} = N_{Out}$, output $M_{Out}$, the derived tree $N\cdot M$ obtained by adjoining $M$ into $N$ is
        a normal tree with input $N_{In}$ and output $M_{Out}$. Further, if $N$ and $M$ generate the
        tuples $(n_1,n_2,n_3,n_4)$ and $(m_1,m_2,m_3,m_4)$, then $N\cdot M$ generates the tuple
        $(n_1\circ m_1, m_2\circ n_2, n_3\circ m_3, m_4\circ n_4)$.
    \end{lemma}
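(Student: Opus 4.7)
My approach is to first verify that $N\cdot M$ satisfies the three defining properties of a normal tree with input $N_{In}$ and output $M_{Out}$, and then to track how the four categories of leaves from $N$ and $M$ interleave in the left-to-right leaf reading of $N\cdot M$.

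For the structural claim, I would unfold the definition of adjunction. Let $v$ be the unique adjunction node of $N$, so that $v$ is labeled $N_{Out}=M_{In}$. After Step~1, the subtree rooted at $v$ replaces $M$'s foot, and after Step~2, the modified $M$ replaces $v$ in $N$. Thus the root of $N\cdot M$ is still the root of $N$ (label $N_{In}$), and its foot is still the foot of $N$. The only node still marked for adjunction is the original adjunction node of $M$, which is labeled $M_{Out}$. Moreover, the root-to-foot path of $N\cdot M$ is the concatenation of four segments: the root-to-$v$ path of $N$, then the root-to-$M_{Out}$ path of $M$, then the $M_{Out}$-to-(former $M$-foot) path of $M$, then the $v$-to-foot path of $N$. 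The unique adjunction node $M_{Out}$ lies on this path, so $N\cdot M$ is indeed a normal tree with input $N_{In}$ and output $M_{Out}$.

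For the tuple claim, I would invoke the following elementary observation about left-to-right leaf orderings: along any root-to-leaf path in a tree, the leaves of subtrees hanging off on the left are visited in top-to-bottom order along the path, while those hanging off on the right are visited in bottom-to-top order. Apply this to the two composite paths identified above. Along the root-to-output path of $N\cdot M$, the left-hanging subtrees contribute $n_1$ (from $N$'s upper segment) followed by $m_1$ (from $M$'s lower segment), and the right-hanging ones contribute $m_4$ followed by $n_4$. Along the output-to-foot path, since $M$'s portion now sits above $N$'s portion, the left-hanging subtrees contribute $m_2\circ n_2$ and the right-hanging ones contribute $n_3\circ m_3$. Reading off the four parts yields the tuple $(n_1\circ m_1,\; m_2\circ n_2,\; n_3\circ m_3,\; m_4\circ n_4)$ as claimed.

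The main technical point I expect to spend care on is getting the orderings within the second and third components right: because left subtrees are visited top-down but right subtrees are visited bottom-up, and because $M$'s contribution lies above $N$'s contribution along the output-to-foot path (whereas it lies below $N$'s along the root-to-output path), the asymmetry between $n_1\circ m_1$ and $m_2\circ n_2$ is easy to misstate. Once the path decomposition of $N\cdot M$ is laid out explicitly, however, the concatenation follows directly from the definition of the tuple generated by a normal tree, with no further calculation needed.
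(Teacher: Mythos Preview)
Your proposal is correct and follows essentially the same approach as the paper, which simply refers to Figure~\ref{fg:ntc}; you have written out in words the path decomposition and leaf-ordering argument that the figure conveys pictorially. The only thing the paper adds is the picture itself, and the only thing you add is the explicit observation about top-down versus bottom-up traversal of left- and right-hanging subtrees, which is exactly the content of that picture.
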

    \begin{proof}
        See Figure~\ref{fg:ntc}.
    \end{proof}
    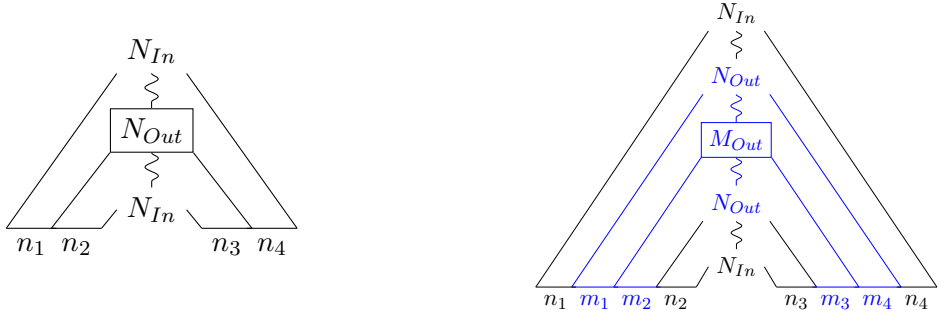
\begin{figure}[ht]
        \begin{subfigure}[b]{0.49\textwidth}
            \begin{center}
                \begin{tikzpicture}
                    \Tree[
                    .\node (1) {$N_{In}$}; \edge[decorate, decoration={snake, segment length=6.7pt}]; [
                    .\node[draw] (2) {$N_{Out}$}; \edge[decorate, decoration={snake, segment length=6.7pt}];[
                    .\node (3) {$N_{In}$};
                    ]
                    ]
                    ]
                    \node (4) at (-1,-2.5) {$n_2$};
                    \draw (4.north east) to (4.north west);
                    \draw (4.north east) to (3.west);
                    \draw (2.south west) to (4.north west);

                    \node (5) at (-1.6,-2.5) {$n_1$};
                    \draw (5.north east) to (5.north west);
                    \draw (1.south west) to (5.north west);

                    \node (6) at (1,-2.5) {$n_3$};
                    \draw (6.north west) to (6.north east);
                    \draw (6.north west) to (3.east);
                    \draw (2.south east) to (6.north east);

                    \node (7) at (1.6,-2.5) {$n_4$};
                    \draw (7.north west) to (7.north east);
                    \draw (1.south east) to (7.north east);
                    \node (A) at (1,-3.3) {};
                \end{tikzpicture}
            \end{center}
            \caption{A normal tree $N$.}
            \label{fg:normaltree}
        \end{subfigure}
        ~
        \begin{subfigure}[b]{0.49\textwidth}
            \begin{center}
                \scalebox{0.8}{
                    \begin{tikzpicture}
                        \Tree [
                        .\node (0) {$N_{In}$}; \edge[decorate, decoration={snake, segment length=6.7pt}]; [
                        .\node[color=blue] (1) {$N_{Out}$}; \edge[color=blue,decorate, decoration={snake, segment length=6.7pt}]; [
                        .\node[draw, color=blue] (2) {$M_{Out}$}; \edge[color=blue,decorate, decoration={snake, segment length=6.7pt}];[
                        .\node[color=blue] (3) {$N_{Out}$};  \edge[decorate, decoration={snake, segment length=6.7pt}]; [
                        .\node (35) {$N_{In}$};
                        ]
                        ]
                        ]
                        ]
                        ]
                        \node (8) at (-1,-4.7) {$n_2$};
                        \draw (8.north east) to (8.north west);
                        \draw (8.north east) to (35.west);
                        \draw (3.south west) to (8.north west);

                        \node[color=blue] (4) at (-1.65,-4.7) {$m_2$};
                        \draw[color=blue] (4.north east) to (4.north west);
                        \draw[color=blue] (2.south west) to (4.north west);

                        \node[color=blue] (5) at (-2.35,-4.7) {$m_1$};
                        \draw[color=blue] (5.north east) to (5.north west);
                        \draw[color=blue] (1.south west) to (5.north west);

                        \node (9) at (-3,-4.7) {$n_1$};
                        \draw (9.north east) to (9.north west);
                        \draw (0.south west) to (9.north west);

                        \node (10) at (1,-4.7) {$n_3$};
                        \draw (10.north east) to (10.north west);
                        \draw (10.north west) to (35.east);
                        \draw (3.south east) to (10.north east);

                        \node[color=blue] (11) at (1.65,-4.7) {$m_3$};
                        \draw[color=blue] (11.north east) to (11.north west);
                        \draw[color=blue] (2.south east) to (11.north east);

                        \node[color=blue] (12) at (2.35,-4.7) {$m_4$};
                        \draw[color=blue] (12.north east) to (12.north west);
                        \draw[color=blue] (1.south east) to (12.north east);

                        \node (13) at (3,-4.7) {$n_4$};
                        \draw (13.north east) to (13.north west);
                        \draw (0.south east) to (13.north east);
                    \end{tikzpicture}
                }
            \end{center}
            \caption{The tree resulting after adjoining $M$ into $N$.}
        \end{subfigure}
        \caption{Adjoining normal trees.}\label{fg:ntc}
    \end{figure}
    We now define a {\em program ${\sf P}$ with input ${\sf P}_{In}$ and output ${\sf P}_{Out}$}  as a set of normal trees
    that contains a tree with input ${\sf P}_{In}$ and a tree with output ${\sf P}_{Out}$.
    Note that all trees derived by starting with a tree in ${\sf P}$ and repeatedly adjoining trees from ${\sf P}$ are normal, by Lemma~\ref{lm:4:ntc}.
    An {\em execution} of the program ${\sf P}$ is a derived
    tree of {\sf P} with input ${\sf P}_{In}$ and output ${\sf P}_{Out}$. Further, the {\em set
    computed by {\sffamily P}}, denoted by $L({\sf P})$, is the set of all tuples generated by {\sf P}'s executions.

    We will later use programs as subroutines of tree-adjoining grammars. Let $N({\sf P})$ be the set of non-terminals of {\sf P}. Formally, we say that ${\sf P}$ is a \emph{subroutine} of a grammar $\Gamma$ if
    \begin{itemize}
        \item the set of trees ${\sf P}$ is a subset of the auxiliary trees of $\Gamma$, and
        \item no remaining auxiliary tree of $\Gamma$ has a root label in $N({\sf P}) \setminus \{{\sf P}_{Out}\}$.
    \end{itemize}
    These restrictions ensure that any ``call'' to the program {\sf P} terminates at ${\sf P}_{Out}$.
    Indeed, consider any sequence of adjunctions in $\Gamma$ ending in a tree without nodes marked for adjunction. If this sequence contains an adjunction of a node labeled ${\sf P}_{In}$, meaning that program {\sf P} is called, then this adjunction must be followed by an execution of ${\sf P}$, i.e., it must generate a derived tree of {\sf P} with output ${\sf P}_{Out}$. Indeed, any derived tree of {\sf P} is normal and thus contains exactly one node marked for adjunction. To get rid of this node, we have to adjoin some auxiliary tree, but the remaining auxiliary trees can only adjoin to ${\sf P}_{Out}$.
    We will frequently make use of this observation that ensures coherence of programs.

    We now show how to perform two programs sequentially one after another. To avoid interference, we ensure that the two programs have disjoint non-terminals, except for their input and output. In particular, we will model two sequential calls to the same program by creating two copies of the program.

    \begin{lemma}[Combining programs]\label{lm:4:combine}
        For programs ${\sf P}$ and ${\sf Q}$, let ${\sf Q'}$
        denote the program obtained from {\sf Q} by replacing each non-terminal by a fresh copy, ensuring that {\sf P} and ${\sf Q'}$ have disjoint non-terminals. Further, let ${\sf Q''}$ denote the program obtained from ${\sf Q'}$ by replacing ${\sf Q'}_{In}$ by ${\sf P}_{Out}$.
        Then ${\sf P}\cdot{\sf Q} := {\sf P}\cup {\sf Q''}$ is a program computing the set
        \[ L({\sf P}\cdot{\sf Q}):=\{(a\circ a',b'\circ b, c\circ c', d'\circ d)\mid (a,b,c,d)\in L({\sf P}), (a',b',c',d')\in L({\sf Q})\}. \]
    \end{lemma}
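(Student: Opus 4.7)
The plan is to prove the two inclusions of $L({\sf P}\cdot{\sf Q})$ with the claimed set, threading the tuple concatenation through Lemma~\ref{lm:4:ntc} and using the non-terminal disjointness produced by the renaming to force the structural decomposition. A preliminary check confirms that ${\sf P}\cdot{\sf Q}={\sf P}\cup{\sf Q''}$ is a valid program with input ${\sf P}_{In}$ and output ${\sf Q''}_{Out}$: it is a set of normal trees, inherits a tree of input ${\sf P}_{In}$ from ${\sf P}$ and a tree of output ${\sf Q''}_{Out}$ from ${\sf Q''}$, and the renaming guarantees $N({\sf P})\cap N({\sf Q''})=\{{\sf P}_{Out}\}$ since every non-terminal of ${\sf Q'}$ is fresh and only ${\sf Q'}_{In}$ is identified with ${\sf P}_{Out}$.

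For the inclusion ``$\supseteq$'', given an execution $N$ of ${\sf P}$ generating $(a,b,c,d)$ and an execution $M'$ of ${\sf Q''}$ generating $(a',b',c',d')$, Lemma~\ref{lm:4:ntc} applies because $N$'s output label equals $M'$'s input label (both being ${\sf P}_{Out}={\sf Q''}_{In}$). It yields a normal tree $N\cdot M'$ of input ${\sf P}_{In}$ and output ${\sf Q''}_{Out}$ generating $(a\circ a', b'\circ b, c\circ c', d'\circ d)$; being a derived tree of ${\sf P}\cdot{\sf Q}$, it is an execution, and since renaming non-terminals does not alter the generated tuples of terminals we have $L({\sf Q''})=L({\sf Q})$, so $(a',b',c',d')$ ranges precisely over $L({\sf Q})$.

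For the reverse inclusion, I would take an arbitrary execution $E$ of ${\sf P}\cdot{\sf Q}$ and factor it as $E=N\cdot M'$ with $N$ an execution of ${\sf P}$ and $M'$ an execution of ${\sf Q''}$. The initial tree of the derivation must come from ${\sf P}$, since its root label ${\sf P}_{In}$ lies in $N({\sf P})\setminus\{{\sf P}_{Out}\}$; at every subsequent step the root label of the next adjoined tree must match the current output label, so the disjointness $N({\sf P})\cap N({\sf Q''})=\{{\sf P}_{Out}\}$ forces ${\sf P}$-trees while the output lies in $N({\sf P})\setminus\{{\sf P}_{Out}\}$ and ${\sf Q''}$-trees while it lies in $N({\sf Q''})\setminus\{{\sf P}_{Out}\}$. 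The main obstacle is arguing that the derivation cleanly splits at the first transition to a ${\sf Q''}$-tree and never returns to pure ${\sf P}$ afterwards: I would handle this by induction on the number of adjunctions, showing that once a ${\sf Q''}$-tree has been introduced, all subsequent output labels remain in $N({\sf Q''})$, so the ${\sf P}$-phase ends and the ${\sf Q''}$-phase takes over until output ${\sf Q''}_{Out}$ is reached. With the split in hand, one more application of Lemma~\ref{lm:4:ntc} to $E=N\cdot M'$ yields the generated tuple in the claimed form, completing the proof.
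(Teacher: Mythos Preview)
Your approach is the same as the paper's---both rest on Lemma~\ref{lm:4:ntc}---but the paper's own proof is a single sentence (``As every execution of {\sf P} and {\sf Q''} is a normal tree, the claim follows from Lemma~\ref{lm:4:ntc}''), so you are supplying detail the paper omits. Your $\supseteq$ direction is fine.

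Your $\subseteq$ argument, however, has a gap precisely at the point you flag as the ``main obstacle.'' The induction you propose---``once a ${\sf Q''}$-tree has been introduced, all subsequent output labels remain in $N({\sf Q''})$''---does not close: a tree of ${\sf Q''}$ may have output label ${\sf P}_{Out}$ (namely any renamed copy of a ${\sf Q}$-tree whose output was ${\sf Q}_{In}$), and then the next adjoined tree may legally be a ${\sf P}$-tree with input ${\sf P}_{Out}$, sending the output label back into $N({\sf P})\setminus\{{\sf P}_{Out}\}$. When both such trees exist the derivation can genuinely interleave ${\sf P}$- and ${\sf Q''}$-trees, and one can build concrete ${\sf P},{\sf Q}$ for which $L({\sf P}\cdot{\sf Q})$ properly contains the right-hand set, so the inclusion fails without a further hypothesis. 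The natural fix is to assume either that no tree of ${\sf P}$ has input ${\sf P}_{Out}$ or that no tree of ${\sf Q}$ has output ${\sf Q}_{In}$; either condition holds for every program the paper actually constructs, and either makes your induction go through verbatim. The paper's one-line proof does not address this point at all.
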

    \begin{proof}
        As every execution of {\sf P} and {\sf Q''} is a normal tree, the claim follows
        from Lemma~\ref{lm:4:ntc}.
    \end{proof}
    We can think of $\cdot$ as an operator on programs; the above lemma shows that it is associative.

    \subsection{Basic programs}

    We now present some easy programs that will later be used as subroutines.

    \paragraph*{Writing characters} We start by demonstrating a program that writes exactly one
    character to each of the four positions.
    Formally, given a 4-tuple of characters $(a,b,c,d)$, let the program ${\sf W}{(a,b,c,d)}$ be
    defined by the following auxiliary tree:
    \begin{center}
        \scalebox{1}{
            \begin{tikzpicture}
                \Tree [
                .${\sf W}{(a,b,c,d)}_{In}$ {\tt a} [
                .\node[draw]{${\sf W}{(a,b,c,d)}_{Out}$}; {\tt b} ${\sf W}{(a,b,c,d)}_{In}$ {\tt c}
                ] {\tt d}
                ]
        \end{tikzpicture}}
    \end{center}
    Clearly, this tree is normal with input ${\sf W}{(a,b,c,d)}_{In}$ and output ${\sf W}{(a,b,c,d)}_{Out}$, so that ${\sf W}{(a,b,c,d)}$ is a program. The tree itself is an execution of the program, and it is the only execution. Thus, this program computes the set $L({\sf W}{(a,b,c,d)}) = \{(a,b,c,d)\}$. We write ${\sf W}(a)$ to denote the program ${\sf W}{(a,a,a,a)}$.

    \paragraph*{Testing equality} We give a program that tests equality of four strings, by writing the same arbitrary string to all four positions.
    Formally, for any terminal alphabet $\Sigma$, let the program ${\sf Eq}(\Sigma)$ be defined by the
    following set of $|\Sigma| + 1$ auxiliary trees:
    \begin{center}
        \scalebox{1}{
            \begin{tikzpicture}
                \Tree [
                .${\sf Eq}(\Sigma)_{In}$ [
                .\node[draw]{${\sf Eq}(\Sigma)_{Out}$}; ${\sf Eq}(\Sigma)_{In}$
                ]
                ]
            \end{tikzpicture}
            \quad\quad\quad\quad
            \begin{tikzpicture}
                \Tree [
                .${\sf Eq}(\Sigma)_{In}$ $\sigma$  [
                .\node[draw]{${\sf Eq}(\Sigma)_{In}$}; $\sigma$  ${\sf Eq}(\Sigma)_{In}$ $\sigma$
                ] $\sigma$
                ]
                \node () at (2.9,-.93) {$\forall\sigma \in \Sigma$};
        \end{tikzpicture}}
    \end{center}
    A simple induction shows that $L({\sf Eq}(\Sigma)) = \{ (v, v^R, v, v^R) \mid v \in  \Sigma^* \}$.

    \paragraph*{Writing anything}
    We will need to write appropriate strings surrounding some carefully
    constructed substrings. As it turns out, being able to write anything will be sufficient; this
    is achieved by the following program. Given an alphabet $\Sigma$,
    let the program ${\sf A}(\Sigma)$ be defined by the following set of $4|\Sigma|+1$ trees:
    \begin{center}
        \scalebox{1}{
            \begin{tikzpicture}
                \Tree [
                .${\sf A}(\Sigma)_{In}$ [
                .\node[draw]{${\sf A}(\Sigma)_{Out}$}; ${\sf A}(\Sigma)_{In}$
                ]
                ]
            \end{tikzpicture}
            \begin{tikzpicture}
                \Tree [
                .${\sf A}(\Sigma)_{In}$ \edge[draw=none]; \phantom{$\sigma$} [
                .\node[draw]{${\sf A}(\Sigma)_{In}$}; ${\sf A}(\Sigma)_{In}$
                ] $\sigma$
                ]
            \end{tikzpicture}
            \begin{tikzpicture}
                \Tree [
                .${\sf A}(\Sigma)_{In}$ $\sigma$  [
                .\node[draw]{${\sf A}(\Sigma)_{In}$}; ${\sf A}(\Sigma)_{In}$
                ] \edge[draw=none]; \phantom{$\sigma$ }
                ]
            \end{tikzpicture}
            \begin{tikzpicture}
                \Tree [
                .${\sf A}(\Sigma)_{In}$  [
                .\node[draw]{${\sf A}(\Sigma)_{In}$}; $\sigma$ ${\sf A}(\Sigma)_{In}$ \edge[draw=none]; \phantom{$\sigma$ }
                ]
                ]
            \end{tikzpicture}
            \begin{tikzpicture}
                \Tree [
                .${\sf A}(\Sigma)_{In}$ [
                .\node[draw]{${\sf A}(\Sigma)_{In}$}; \edge[draw=none]; \phantom{$\sigma$} ${\sf A}(\Sigma)_{In}$ $\sigma$                ]
                ]
                \node () at (2.5,-.93) {$\forall\sigma \in \Sigma$};
        \end{tikzpicture}}
    \end{center}
    As this program allows writing anything, it is easy to see that
    ${\sf A}(\Sigma)$ computes the set $L({\sf A}(\Sigma)) = \{(v_1,v_2,v_3,v_4) \mid v_1,v_2,v_3,v_4 \in \Sigma^* \} = {\left(\Sigma^*\right)}^4$.

    \subsection{Detecting Cliques}

    With the help of the above programs, we now design programs that detect a $6k$-clique.

    \paragraph*{Detecting claws}
    Our next program can detect whether four nodes form a claw graph.
    \[
        {\sf NC} := {\sf W}(\#) \cdot {\sf A}(\{0,1,\$\}) \cdot {\sf W}(\$) \cdot {\sf Eq}(\{0,1\})
        \cdot {\sf W}(\$) \cdot {\sf A}(\{0,1,\$\}) \cdot {\sf W}{(\#)}
    \]
    \begin{lemma}\label{lm:4:nc}
        For any nodes $v_1,v_2,v_3,v_4$, the program {\sf NC} generates the tuple
        \[(a,b,c,d) := (\# \, \NG{v_1} \, \# ,  \, \# \, \LG{v_2}^R \, \# , \,  \# \, \LG{v_3} \, \# ,  \, \# \, \LG{v_4}^R \, \# )\]
        and any of its cyclic rotations (i.e., $(b,c,d,a)$, $(c,d,a,b)$, and $(d,a,b,c)$)
        if and only if $v_1$ is adjacent to each one of $v_2, v_3$, and $v_4$.
    \end{lemma}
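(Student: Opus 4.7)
The plan is to first pin down the explicit form of $L({\sf NC})$ by chaining Lemma~\ref{lm:4:combine} through the seven component programs, and then to verify both directions of the biconditional with that form in hand. Starting from $L({\sf W}(\#)) = \{(\#,\#,\#,\#)\}$ and appending the remaining six factors, using that each ${\sf W}(x)$ adds $x$ to every coordinate, that ${\sf A}(\{0,1,\$\})$ inserts an arbitrary string over $\{0,1,\$\}$ on each coordinate, and that ${\sf Eq}(\{0,1\})$ forces a common $u \in \{0,1\}^*$ on coordinates~1,3 and its reverse $u^R$ on coordinates~2,4, I would conclude that $L({\sf NC})$ is exactly the set of tuples
\[(\,\#\,p_1\,\$\,u\,\$\,q_1\,\#,\;\#\,q_2\,\$\,u^R\,\$\,p_2\,\#,\;\#\,p_3\,\$\,u\,\$\,q_3\,\#,\;\#\,q_4\,\$\,u^R\,\$\,p_4\,\#)\]
with $u \in \{0,1\}^*$ and $p_i, q_i \in \{0,1,\$\}^*$. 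The subtle point of this step is that combining flips the concatenation order on coordinates~2 and~4, so the second copy of ${\sf A}(\{0,1,\$\})$ contributes $q_j$ to the \emph{left} of $\$u^R\$$ on those coordinates while contributing $q_j$ to the \emph{right} of $\$u\$$ on coordinates~1,3.

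For the ``if'' direction, assume $v_1$ is adjacent to each of $v_2,v_3,v_4$. To produce $(a,b,c,d)$ I would choose $u := \overline{v_1}$ and $p_1 = q_1 := \epsilon$, which makes coordinate~1 equal to $\#\,\$\,\overline{v_1}\,\$\,\# = \#\,\NG{v_1}\,\#$. For each $j \in \{2,3,4\}$, the adjacency $v_1 \in N(v_j)$ implies $\NG{v_1} = \$\,\overline{v_1}\,\$$ occurs as a substring of $\LG{v_j}$, so $\$\,\overline{v_1}^R\,\$$ occurs as a substring of $\LG{v_j}^R$; splitting $\LG{v_j}$ or $\LG{v_j}^R$ around this occurrence supplies valid values of $p_j,q_j$ producing the required coordinate. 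For the rotation $(b,c,d,a)$ I would instead take $u := \overline{v_1}^R$, so that $\$u^R\$ = \$\,\overline{v_1}\,\$$ sits on coordinate~4 and produces $\#\,\NG{v_1}\,\#$ there; the other three coordinates are filled in analogously using adjacency. The remaining rotations $(c,d,a,b)$ and $(d,a,b,c)$ are handled symmetrically.

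For the ``only if'' direction, I would focus on whichever coordinate of the generated tuple equals $\#\,\NG{v_1}\,\# = \#\,\$\,\overline{v_1}\,\$\,\#$. By the shape of $L({\sf NC})$, that coordinate reads $\#\,\alpha\,\$\,w\,\$\,\beta\,\#$ with $w \in \{u,u^R\} \subseteq \{0,1\}^*$ and $\alpha,\beta \in \{0,1,\$\}^*$. Since $\overline{v_1}$ contains no $\$$ and the target has exactly two $\$$s, this equation forces $\alpha = \beta = \epsilon$ and $w = \overline{v_1}$, pinning down $u$ (to either $\overline{v_1}$ or $\overline{v_1}^R$, depending on the rotation). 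Substituting this value of $u$ into the other three coordinates---each required to equal $\#\,\LG{v_j}\,\#$ or $\#\,\LG{v_j}^R\,\#$---forces $\NG{v_1}$ to appear as a substring of $\LG{v_j}$. Because $\overline{v_1}$ is a $\$$-free binary string, it must coincide with one of the neighborhood blocks that make up $\LG{v_j}$, yielding $v_1 \in N(v_j)$.

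The main obstacle I expect is the bookkeeping around reversals: one must track that combining flips concatenation on coordinates~2 and~4, so $u$ from ${\sf Eq}$ appears there as $u^R$, and simultaneously match this against the $\LG{v_j}^R$ already sitting on coordinates~2,4 in the target. Once the explicit shape of $L({\sf NC})$ is pinned down, the rest reduces to the clean observation that $\$$ acts as an unambiguous delimiter between the maximal $\$$-free binary blocks of $\LG{v_j}$.
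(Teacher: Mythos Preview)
Your proposal is correct and follows essentially the same approach as the paper: first compute $L({\sf NC})$ explicitly via Lemma~\ref{lm:4:combine}, then use that $\NG{v_1}=\$\,\overline{v_1}\,\$$ occurs as a substring of $\LG{v_j}$ iff $v_1\in N(v_j)$ to establish both directions. Your write-up is in fact more careful than the paper's, which glosses over the reversal bookkeeping and the cyclic rotations that you spell out.
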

    \begin{proof}
        By Lemma~\ref{lm:4:combine} and the properties of basic programs,
        we see that {\sf NC} computes all tuples of the form\[
            ( \#\;\alpha_1\;\$\;v\;\$\;\alpha_2\;\#, \#\;\alpha_3\;\$\;v^R\;\$\;\alpha_4\;\#,\#
            \;\alpha_5\;\$\;v\;\$\;\alpha_6\;\#,\#\;\alpha_7\;\$\;v^R\;\$\;\alpha_8\;\#)
        \]
        where $v\in\{0,1\}^*$ and $\alpha_1,\ldots, \alpha_8\in\{0,1,\$\}^*$.
        From the construction of node and list gadgets we see that all tuples $( \# \NG{v_1} \# , \# \LG{v_2}^R \# , \# \LG{v_3} \# ,  \# \LG{v_4}^R \# )$ are of this form.

        For the other direction, for any generated tuple $(a,b,c,d)$, where $a$ is $\#\,\$\,v\,
        \$\,\#$, it holds that $\$\,v\,\$$ is a substring of $c$ and its reverse is a substring of $b$ and $d$.
        Hence, $\NG{v_1}$ is a substring of $\LG{v_2}, \LG{v_3}$, and $\LG{v_4}$. This implies that $v_1$ is adjacent to $v_2, v_3$, and $v_4$.
    \end{proof}

    \paragraph*{Detecting claws of cliques}
    We now extend {\sf NC} to a program that can detect claws of $k$-cliques, see Figure~\ref{fg:claw}.
    We define the program ${\sf CC}$ by the following
    set of 3 trees (additional to the trees of $\sf NC$):
    \begin{center}
        \scalebox{1}{
            \begin{tikzpicture}
                \Tree [
                .${\sf CC}_{In}$ [
                .\node[draw] {${\sf NC}_{Out}$}; [
                .${\sf CC}_{In}$
                ]
                ]
                ]
            \end{tikzpicture}
            \quad\quad
            \quad
            \begin{tikzpicture}
                \Tree [
                .${\sf NC}_{Out}$ [
                .\node[draw]
                {${\sf NC}_{In}$}; [
                .${\sf NC}_{Out}$
                ]
                ]
                ]
            \end{tikzpicture}
            \quad\quad
            \quad
            \begin{tikzpicture}
                \Tree [
                .${\sf NC}_{Out}$ [
                .\node[draw] {${\sf CC}_{Out}$}; [
                .${\sf NC}_{Out}$
                ]
                ]
                ]
        \end{tikzpicture}}
    \end{center}
    Each execution of {\sf CC} starts with the first tree, then repeatedly adjoins the second tree followed by some execution of {\sf NC}, and finally adjoins the last tree. As the number of repetitions is arbitrary, the program {\sf CC} can perform any number of sequential calls to {\sf NC}.\footnote{Actually, we already know how many calls to {\sf NC} we want to perform, namely $k^2$. However, encoding this number into the grammar would result in a grammar size depending on $k$, which we want to avoid.}
    \begin{lemma}\label{lm:4:fork}
        For any $k$-cliques $C_1, C_2, C_3, C_4$ in $G$, the program {\sf CC} generates the tuple
        $(a,b,c,d) := (\CNG{C_1}, \CLG{C_2}^R, \CLG{C_3}, \CLG{C_4}^R)$ and all of its cyclic rotations (i.e., $(b,c,d,a)$, $(c,d,a,b)$, and $(d,a,b,c)$)
        if and only if $C_1 \cup C_2$, $C_1 \cup C_3$, and $C_1 \cup C_4$ each form a $2k$-clique in~$G$.
    \end{lemma}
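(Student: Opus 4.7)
The plan is to reduce a CC execution to a sequence of NC executions and then use the $\#$-block structure of the target string to identify and constrain each call. Because the three additional trees of CC contain no terminal symbols and merely pass control between the non-terminals ${\sf CC}_{In}$, ${\sf NC}_{Out}$, ${\sf NC}_{In}$, ${\sf CC}_{Out}$, iterating Lemma~\ref{lm:4:combine} shows that every execution of CC is equivalent to performing some number $t \ge 0$ of NC executions one after another, and that the resulting tuple is $(a_1 \cdots a_t,\; b_t \cdots b_1,\; c_1 \cdots c_t,\; d_t \cdots d_1)$, where $(a_m, b_m, c_m, d_m)$ is the output of the $m$-th NC call (note the reversed order at positions~2 and~4).

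For the forward direction, fix enumerations $C_1 = \{u_1, \ldots, u_k\}$, $C_2 = \{v_1, \ldots, v_k\}$, $C_3 = \{w_1, \ldots, w_k\}$, $C_4 = \{x_1, \ldots, x_k\}$ and set $t = k^2$. For each $m \in \{1, \ldots, k^2\}$, invoke the $m$-th NC call on the vertices $u_{\lceil m/k \rceil}$, $v_{((m-1) \bmod k) + 1}$, $w_{((m-1) \bmod k) + 1}$, $x_{((m-1) \bmod k) + 1}$; by Lemma~\ref{lm:4:nc} each such call succeeds thanks to the $2k$-clique hypotheses. A direct verification shows that the concatenation at position~1 assembles $(\#\NG{u_1}\#)^k \cdots (\#\NG{u_k}\#)^k = \CNG{C_1}$, at position~3 assembles $(\#\LG{w_1}\#\cdots\#\LG{w_k}\#)^k = \CLG{C_3}$, and at positions~2 and~4 the reversed concatenations yield $\CLG{C_2}^R$ and $\CLG{C_4}^R$. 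The cyclic rotations of the target tuple are produced analogously by using appropriately rotated NC outputs.

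For the reverse direction, suppose CC generates $(\CNG{C_1}, \CLG{C_2}^R, \CLG{C_3}, \CLG{C_4}^R)$. From the proof of Lemma~\ref{lm:4:nc}, every NC output at position~1 has the form $\#\alpha_1\$v\$\alpha_2\#$ with $\alpha_1, \alpha_2$ not containing~$\#$. Since $\CNG{C_1}$ contains exactly $2k^2$ copies of~$\#$, placed at the boundaries of its $k^2$ blocks $\#\NG{u}\#$, this forces $t = k^2$ and identifies each $a_m$ with one such block, so the inner test vertex of the $m$-th NC call is $u_{\lceil m/k \rceil}$. The same analysis at positions~2, 3, 4, taking the reversed concatenation order at positions~2 and~4 into account, matches each $b_m, c_m, d_m$ against a specific block of $\CLG{C_2}^R$, $\CLG{C_3}$, $\CLG{C_4}^R$ and identifies the remaining three test vertices of the $m$-th call as $v_{j_m}, w_{j_m}, x_{j_m}$ for $j_m := ((m-1) \bmod k) + 1$. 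Applying Lemma~\ref{lm:4:nc} to each call then yields adjacency of $u_{\lceil m/k \rceil}$ to each of $v_{j_m}, w_{j_m}, x_{j_m}$; as $m$ ranges over $[k^2]$, every pair in $C_1 \times C_j$ (for $j \in \{2, 3, 4\}$) is realized as an edge, which gives the required $2k$-cliques. The cyclic rotation cases are handled by the same argument after relabeling positions.

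The main obstacle is the careful bookkeeping around the reverse-concatenated positions~2 and~4: the $m$-th NC piece lands as the $m$-th block from the \emph{right} of $\CLG{C_2}^R = (\#\LG{v_k}^R\#\cdots\#\LG{v_1}^R\#)^k$, so one must correctly align the call index $m$ with the cyclic pattern of $C_2$ (and similarly for $C_4$) to read off the tested adjacencies.
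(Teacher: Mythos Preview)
Your proof is correct and follows essentially the same approach as the paper's: reduce an execution of {\sf CC} to a sequence of {\sf NC} calls, then use the $\#$-block structure of $\CNG{\cdot}$ and $\CLG{\cdot}$ to align each call with a specific pair of vertices and invoke Lemma~\ref{lm:4:nc}. Your version is in fact more explicit than the paper's, which sketches the block alignment without spelling out the index map or the reversed concatenation order at positions~2 and~4; your careful treatment of that reversal (matching the $m$-th call to the $m$-th block from the right of $\CLG{C_2}^R$) fills in exactly the bookkeeping the paper leaves implicit.
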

    \begin{proof}
        For any nodes $v_i^j$, with $i \in [4], j \in [m], m \ge 1$, set
        \[ n_i := \underset{j \in [m]}{\bigcirc} \# \, NG( v_i^j ) \, \# \qquad \text{and} \qquad \ell_i := \underset{j \in [m]}{\bigcirc} \# \, LG( v_i^j ) \, \#. \]
        As program {\sf CC} can perform any number of calls to {\sf NC}, and by Lemma~\ref{lm:4:nc}, program {\sf CC} generates the tuple $(n_1,\ell_2,\ell_3,\ell_4)$ if and only if $v_1^j$ is adjacent to $v_2^j, v_3^j$, and $v_4^j$ for all $j$.

        Observe that for any $k$-cliques $C_1 = \{v_1,\ldots,v_k\}, C_2 = \{u_1,\ldots,u_k\}$, both
        $\CNG{C}$ and $\CLG{C}$ can be split into $k^2$ blocks by splitting
        between two consecutive $\#$-characters:
        \begin{alignat*}{20}
            \CNG{C_1}  =   &\;& \#\,&&&\NG{v_1}&&\,\#&
            \#\,&&&\NG{v_1}&&\,\#\cdots\,&
            \#\,&&&\NG{v_1}&&\,\#&
            \#\,&&&\NG{v_2}&&\,\#\cdots\nonumber\\
            \CLG{C_2} = &&      \#\,&&&\LG{u_1} &&\,\#&
            \#\,&&&\LG{u_2} &&\,\#\cdots&
            \#\,&&&\LG{u_k} &&\,\#&
            \#\,&&&\LG{u_1} &&\,\#\cdots
        \end{alignat*}
        This layout is chosen so that each node $v_i$ in $C_1$ is paired up with each node $u_j$ in $C_2$ exactly once. The claim follows from these two insights.
    \end{proof}

    \paragraph*{Detecting almost-\boldmath$4k$-cliques}
    We now use {\sf CC} twice to test for two claws, thus detecting ``almost-$4k$-cliques'', as depicted in Figure~\ref{fg:claw}:
    \[ {\sf C} := {\sf CC}
        \cdot {\sf W}(\S)
    \cdot {\sf CC}. \]
    Lemmas~\ref{lm:4:fork} and \ref{lm:4:combine} directly imply the following, see Figure~\ref{fg:claw}.
    \begin{lemma}\label{lm:4:almostclique}
        For any $k$-cliques $C_a,C_b,C_c,C_d$ the program ${\sf C}$ generates
        the tuple
        {\small \[
                (\CNG{C_a}\; \S \; \CLG{C_a}^R,
                \CLG{C_b}\;  \S \; \CLG{C_b}^R,
                \CLG{C_c}\;  \S \; \CLG{C_c}^R,
                \CNG{C_d}\;  \S \; \CLG{C_d}^R  )
        \]}%
        if and only if $C_a\cup C_b\cup C_d$ and
        $C_a\cup C_c\cup C_d$ both form a $3k$-clique.
        A similar statement holds if we pick any two other positions in the tuple for the $\CNG{\cdot}$ gadgets.
    \end{lemma}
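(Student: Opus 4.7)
The plan is to unfold ${\sf C} = {\sf CC}\cdot {\sf W}(\S)\cdot {\sf CC}$ via Lemma~\ref{lm:4:combine} (applied twice, using its associativity), then match the generic output of ${\sf C}$ componentwise against the target tuple to pin down exactly which tuples each copy of ${\sf CC}$ must generate, and finally invoke Lemma~\ref{lm:4:fork} on each copy to translate these into clique conditions on $C_a,C_b,C_c,C_d$.

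Concretely, for arbitrary $(p_1,q_1,r_1,s_1),(p_2,q_2,r_2,s_2)\in L({\sf CC})$ and the unique tuple $(\S,\S,\S,\S)\in L({\sf W}(\S))$, iterating Lemma~\ref{lm:4:combine} yields that ${\sf C}$ generates
\[
(p_1\,\S\,p_2,\ q_2\,\S\,q_1,\ r_1\,\S\,r_2,\ s_2\,\S\,s_1),
\]
where I am careful to recall that the combining operator \emph{prepends} the second program's output at positions $2$ and $4$. Since $\S$ occurs nowhere inside $\CNG{\cdot}$ or $\CLG{\cdot}$, the decomposition at each $\S$ is unique, so componentwise matching against $(\CNG{C_a}\,\S\,\CLG{C_a}^R,\ \CLG{C_b}\,\S\,\CLG{C_b}^R,\ \CLG{C_c}\,\S\,\CLG{C_c}^R,\ \CNG{C_d}\,\S\,\CLG{C_d}^R)$ forces
\[
(p_1,q_1,r_1,s_1)=(\CNG{C_a},\CLG{C_b}^R,\CLG{C_c},\CLG{C_d}^R),\quad (p_2,q_2,r_2,s_2)=(\CLG{C_a}^R,\CLG{C_b},\CLG{C_c}^R,\CNG{C_d}).
\]

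The first tuple is precisely the base form of Lemma~\ref{lm:4:fork} with $(C_1,C_2,C_3,C_4)=(C_a,C_b,C_c,C_d)$, so it lies in $L({\sf CC})$ iff $C_a\cup C_b$, $C_a\cup C_c$, and $C_a\cup C_d$ are $2k$-cliques. The second tuple is the cyclic rotation $(b,c,d,a)$ of that base form under the relabeling $(C_1,C_2,C_3,C_4)=(C_d,C_a,C_b,C_c)$, so by Lemma~\ref{lm:4:fork} it lies in $L({\sf CC})$ iff $C_d\cup C_a$, $C_d\cup C_b$, and $C_d\cup C_c$ are $2k$-cliques. Gluing these five adjacency conditions together on the clique-graph of Figure~\ref{fg:claw}, $C_a$ is adjacent to each of $C_b,C_c,C_d$ and $C_d$ is adjacent to each of $C_a,C_b,C_c$, which is equivalent to $C_a\cup C_b\cup C_d$ and $C_a\cup C_c\cup C_d$ both being $3k$-cliques. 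The stated `similar statement' for relocating the two $\CNG{\cdot}$ positions follows by feeding each ${\sf CC}$ copy the appropriate cyclic rotation of its target tuple and applying Lemma~\ref{lm:4:fork} to the corresponding rotation.

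The main obstacle is purely bookkeeping: correctly tracking which positions are appended vs.\ prepended by Lemma~\ref{lm:4:combine}, ensuring reversed and unreversed $\CLG{\cdot}$ blocks align with the right side of each $\S$, and verifying that the second ${\sf CC}$'s target tuple is one of the four rotations admitted by Lemma~\ref{lm:4:fork}. Once the unique-decomposition observation for the $\S$-separators is in place, the remainder is a direct citation of Lemmas~\ref{lm:4:combine} and~\ref{lm:4:fork}.
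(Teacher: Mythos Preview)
Your proposal is correct and follows exactly the approach the paper intends: the paper's own proof is the single sentence ``Lemmas~\ref{lm:4:fork} and~\ref{lm:4:combine} directly imply the following, see Figure~\ref{fg:claw},'' and you have simply made that implication explicit, correctly tracking the append/prepend pattern of Lemma~\ref{lm:4:combine} and identifying the second ${\sf CC}$ tuple as the rotation $(b,c,d,a)$ with center $C_d$. One minor sharpening: your uniqueness argument for the $\S$-split should appeal to the fact that $\S$ cannot be produced by ${\sf CC}$ (its subprograms only emit symbols from $\{0,1,\$,\#\}$), not merely that $\S$ is absent from the specific gadgets $\CNG{\cdot},\CLG{\cdot}$; this is what guarantees each component of any ${\sf C}$-output has exactly one $\S$.
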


    \paragraph*{Detecting \boldmath$6k$-cliques}
    As in Figure~\ref{fg:decomposealmostfourclique}, we now want to test for three almost-$4k$-cliques to detect a $6k$-clique.
    Recall that $T = \{0,1,\$,\#,|,\S,e,l_1,\ldots,l_6,r_1,\ldots,r_6\}$ is the terminal alphabet that we constructed our strings over.
    The following programs will generate the highlighted
    groups in Figure~\ref{fg:decomposealmostfourclique}:
    \begin{align*}
        {\sf P(1,3,4,6)} &:= {\sf A}(T)\,\cdot\,{\sf W}{(|)}\,\cdot\,{\sf C}\,\cdot\,{\sf W}(l_1,r_3,l_4,r_6)\\
        {\sf P(1,2,5,6)} &:= {\sf W}{(r_1,l_2,r_5,l_6)}\,\cdot\,{\sf C}\,
        \cdot\,{\sf W}{(|)}\,\cdot\,{\sf A}(T),\\
        {\sf P(2,3,4,5)} &:= {\sf W}(r_2,l_3,r_4,l_5)\,\cdot\,{\sf C}\,
        \cdot\,{\sf W}{(|)}\,\cdot\,{\sf A}(T),
    \end{align*}
    We now deviate from our notion of normal trees by explicitly \emph{not} marking ${\sf P(1,2,5,6)}_{Out}$ and ${\sf P(2,3,4,5)}_{Out}$ for adjunction.
    Our final tree-adjoining grammar $\Gamma$ consists of the following initial and auxiliary trees (as well as all auxiliary trees used by its subroutines):
    \begin{center}
        \scalebox{1}{
            \begin{tikzpicture}
                \tikzset{level 1/.style={level distance=43pt}}
                \tikzset{level 2/.style={level distance=50pt}}
                \Tree [.S [.\node[draw, color=blue]{\udash{${\sf P(1,3,4,6)}_{In}$}}; [
                .$e$  ] ] ]
            \end{tikzpicture}
            \quad\quad\quad\quad
            \begin{tikzpicture}
                \Tree [
                .\node[color=blue]{\udash{${\sf P(1,3,4,6)}_{Out}$}}; [
                .\node[draw, color=red]{\uline{${\sf P(1,2,5,6)}_{In}$}}; [
                .\node[draw, color=purple]{\udot{${\sf P(2,3,4,5)}_{In}$}}; [
                .\node[color=blue]{\udash{${\sf P(1,3,4,6)}_{Out}$}};
                ]
                ]
                ]
                ]
        \end{tikzpicture}}
    \end{center}
    Note that the latter tree is the only one in $\Gamma$ that has more than one node marked for adjunction, so it needs special treatment in the analysis.
    \begin{lemma}
        For any graph $G$, the grammar $\Gamma$ generates the encoding $\GG{G}$ if and only if $G$ contains a $6k$-clique. Moreover, $\Gamma$ has constant size (independent of $k$).
    \end{lemma}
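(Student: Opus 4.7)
The plan is to verify both directions of the iff and then bound the size. First I would pin down the shape of any successful derivation using the subroutine coherence principle stated after Lemma~\ref{lm:4:combine}: starting from the unique initial tree, the derivation must fully execute ${\sf P(1,3,4,6)}$ (the only program with a tree rooted at ${\sf P(1,3,4,6)}_{In}$), then adjoin the special auxiliary tree (the only remaining tree rooted at ${\sf P(1,3,4,6)}_{Out}$), and finally execute ${\sf P(1,2,5,6)}$ and ${\sf P(2,3,4,5)}$ at their respective marked nodes, in either order. So each derivation is fully specified by the cliques selected inside each of the three programs together with the outputs of the embedded ${\sf A}(T)$ subroutines.

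Second, I would compute the derived string explicitly by iterating Lemmas~\ref{lm:4:ntc} and~\ref{lm:4:combine}. The program ${\sf P(1,3,4,6)}$ produces tuples of the form
\[ \bigl( v_1\,|\,\CNG{A_1}\,\S\,\CLG{A_1}^R\,l_1,\; r_3\,\CLG{A_2}\,\S\,\CLG{A_2}^R\,|\,v_2,\; v_3\,|\,\CLG{A_3}\,\S\,\CLG{A_3}^R\,l_4,\; r_6\,\CNG{A_4}\,\S\,\CLG{A_4}^R\,|\,v_4 \bigr), \]
with analogous tuples $(p_1,p_2,p_3,p_4)$ and $(q_1,q_2,q_3,q_4)$ for ${\sf P(1,2,5,6)}$ and ${\sf P(2,3,4,5)}$ using cliques $B_i$, $D_i$ and their characteristic $l_j$, $r_j$ letters. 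Since the special tree has no non-path leaves but nests ${\sf P(2,3,4,5)}_{In}$ inside the subtree of ${\sf P(1,2,5,6)}_{In}$, a careful leaf-order analysis shows that the derived string equals $n_1 p_1 p_2 q_1 q_2 n_2\,e\,n_3 q_3 q_4 p_3 p_4 n_4$, where the $n_i$ are the components of ${\sf P(1,3,4,6)}$'s tuple.

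For the only-if direction, I would argue that if the derived string equals $\GG{G}$, each of the six substrings $l_j r_j$ must lie inside the unique $\GG{G}$-block that contains $l_j r_j$; since such a block uses a single clique on both sides of $l_j r_j$, this forces the identifications $A_1 = B_1$, $B_2 = D_1$, $D_2 = A_2$, $A_3 = D_3$, $D_4 = B_3$, $B_4 = A_4$. Denoting the six common cliques by $C_1,\ldots,C_6$ and applying Lemma~\ref{lm:4:almostclique} to each of the three programs then yields six $3k$-cliques; enumerating their index pairs shows that across the three programs all $\binom{6}{2}=15$ pairs $(C_i,C_j)$ form $2k$-cliques, because the unique pair missing from each individual almost-$4k$-clique test is covered by one of the other two programs. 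Hence $C_1\cup\cdots\cup C_6$ is a $6k$-clique. For the converse, I would take any $6k$-clique, partition it into $k$-cliques $C_1,\ldots,C_6$, instantiate the three programs with the $C_i$ (passing all clique tests by Lemma~\ref{lm:4:almostclique}), and exploit $L({\sf A}(T))=(T^*)^4$ to set the fillers $v_i, v_i', v_i''$ to exactly those pieces of $\GG{G}$ not produced by the three clique-gadget outputs, yielding $\GG{G}$. The size bound follows immediately: each basic subroutine uses $O(|T|)=O(1)$ trees, every composite program is an $O(1)$-length composition, and $\Gamma$ adds only one initial tree and one special tree, so $\Gamma$ has size independent of $k$.

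The main obstacle is the leaf-order bookkeeping in the second step: because the special tree stacks two marked nodes on a single path and Lemma~\ref{lm:4:ntc} interleaves components outside-in, it is easy to mis-place which of the $p_i$ and $q_i$ ends up where relative to the $n_i$. Once this interleaving is correctly identified, matching the characters $l_j, r_j, |, e$ against $\GG{G}$ pins down the three-way identification of the six cliques, and the remainder reduces to the triangle-cover combinatorics showing that all $15$ pairs are tested.
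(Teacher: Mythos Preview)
Your proposal is correct and follows essentially the same route as the paper: you identify the forced derivation structure (execute ${\sf P(1,3,4,6)}$, adjoin the special tree, then execute the other two programs), you obtain exactly the same interleaving $n_1 p_1 p_2 q_1 q_2 n_2\, e\, n_3 q_3 q_4 p_3 p_4 n_4$ that the paper derives, and you then use the delimiters $l_j, r_j, |$ together with Lemma~\ref{lm:4:almostclique} to pin down the six cliques and check all $\binom{6}{2}$ pairs, just as the paper does. One small point of phrasing: in your second paragraph you describe the tuples ${\sf P(1,3,4,6)}$ ``produces'' as already being in the specific clique-gadget form $\CNG{A_i}\,\S\,\CLG{A_i}^R$, but of course ${\sf C}$ can output many other strings; it is only the match against $\GG{G}$ (via the $|$ and $l_j r_j$ anchors, as you note later) that forces this form, so make sure your write-up separates ``what the programs can generate'' from ``what matching $\GG{G}$ forces.''
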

    \begin{figure}[ht]
        \begin{center}
            \scalebox{0.58}{
                \begin{tikzpicture}
                    \Tree [
                    .S [
                    .\node[color=blue](1){\udash{${\sf P(1,3,4,6)}_{In}$}};
                    \edge[color=blue, decorate, decoration={snake, segment length=6.1pt}]; [
                    .\node[color=blue](2){\udash{${\sf P(1,3,4,6)}_{Out}$}}; [
                    .\node[color=red](3){\uline{${\sf P(1,2,5,6)}_{In}$}};
                    \edge[color=red, decorate, decoration={snake, segment length=6.1pt}]; [
                    .\node[color=red](4){\uline{${\sf P(1,2,5,6)}_{Out}$}};
                    \edge[color=red, decorate, decoration={snake, segment length=6.1pt}]; [
                    .\node[color=red](5){\uline{${\sf P(1,2,5,6)}_{In}$}}; [
                    .\node[color=purple](6){\udot{${\sf P(2,3,4,5)}_{In}$}};
                    \edge[color=purple, decorate, decoration={snake, segment length=6.1pt}];[
                    .\node[color=purple](7){\udot{${\sf P(2,3,4,5)}_{Out}$}}; \edge[color=purple, decorate, decoration={snake, segment length=6.1pt}]; [
                    .\node[color=purple](8){\udot{${\sf P(2,3,4,5)}_{In}$}}; [
                    .\node[color=blue](9){\udash{${\sf P(1,3,4,6)}_{Out}$}}; \edge[color=blue, decorate, decoration={snake, segment length=6.1pt}];[
                    .\node[color=blue](10){\udash{${\sf P(1,3,4,6)}_{In}$}}; [
                    .\node(11) at (0,0) {$e$};
                    ]
                    ]
                    ]
                    ]
                    ]
                    ]
                    ]
                    ]
                    ]
                    ]
                    ]
                    ]
                    \node(12) at (-1.45,-11.49) {$r_3{\color{blue}\udash{$(C_3)$}}|\ldots $};
                    \draw[color=blue, dashed] (12.north east) to (10);
                    \draw[color=blue, dashed] (12.north west) to (9.south west);
                    \draw[color=blue, dashed] (12.north west) to (12.north east);

                    \node(13) at (-3.4,-11.49) {$\ldots |\,{\color{purple}\udot{$(C_3)$}}l_3$};
                    \draw[color=purple, dotted, thick] (13.north east) to (8.south west);
                    \draw[color=purple, dotted, thick] (13.north west) to (7.south west);
                    \draw[color=purple, dotted, thick] (13.north west) to (13.north east);

                    \node(14) at (-5.3,-11.49) {$r_2{\color{purple}\udot{$(C_2)$}}|\ldots $};
                    \draw[color=purple, dotted, thick] (14.north west) to (6.south west);
                    \draw[color=purple, dotted, thick] (14.north west) to (14.north east);

                    \node(15) at (-7.2,-11.49) {$\ldots |\,{\color{red}\uline{$(C_2)$}}l_2$};
                    \draw[color=red] (15.north east) to (5.south west);
                    \draw[color=red] (15.north west) to (4.south west);
                    \draw[color=red] (15.north west) to (15.north east);

                    \node(16) at (-9.1,-11.49) {$r_1{\color{red}\uline{$(C_1)$}}\;|\ldots $};
                    \draw[color=red] (16.north west) to (3.south west);
                    \draw[color=red] (16.north west) to (16.north east);

                    \node(17) at (-11.05,-11.49) {$\ldots |{\color{blue}\udash{$(C_1)$}}l_1$};
                    \draw[color=blue, dashed] (17.north east) to (2.south west);
                    \draw[color=blue, dashed] (17.north west) to (1.south west);
                    \draw[color=blue, dashed] (17.north west) to (17.north east);

                    \node(18) at (1.45,-11.49) {$\ldots |\,{\color{blue}\udash{$(C_4)$}}l_4$};
                    \draw[color=blue, dashed] (18.north west) to (10);
                    \draw[color=blue, dashed] (18.north east) to (9.south east);
                    \draw[color=blue, dashed] (18.north east) to (18.north west);

                    \node(19) at (3.4,-11.49) {$r_4{\color{purple}\udot{$(C_4)$}}\,|\ldots $};
                    \draw[color=purple, dotted, thick] (19.north west) to (8.south east);
                    \draw[color=purple, dotted, thick] (19.north east) to (7.south east);
                    \draw[color=purple, dotted, thick] (19.north east) to (19.north west);

                    \node(20) at (5.2,-11.49) {$\ldots |{\color{purple}\udot{$(C_5)$}}l_5$};
                    \draw[color=purple, dotted, thick] (20.north east) to (6.south east);
                    \draw[color=purple, dotted, thick] (20.north east) to (20.north west);

                    \node(21) at (7.1,-11.49) {$r_5{\color{red}\uline{$(C_5)$}}\,|\ldots $};
                    \draw[color=red] (21.north west) to (5.south east);
                    \draw[color=red] (21.north east) to (4.south east);
                    \draw[color=red] (21.north east) to (21.north west);

                    \node(22) at (9,-11.49) {$\ldots |\,{\color{red}\uline{$(C_6)$}}l_5$};
                    \draw[color=red] (22.north east) to (3.south east);
                    \draw[color=red] (22.north east) to (22.north west);

                    \node(23) at (10.89,-11.49) {$r_6{\color{blue}\udash{$(C_6)$}}|\ldots $};
                    \draw[color=blue, dashed] (23.north west) to (2.south east);
                    \draw[color=blue, dashed] (23.north east) to (1.south east);
                    \draw[color=blue, dashed] (23.north east) to (23.north west);
            \end{tikzpicture}}
        \end{center}
        \caption{Global structure of a parsing of $\GG{G}$ by $\Gamma$.
        (Clique gadgets are abbreviated.)}\label{fg:gamma}
    \end{figure}
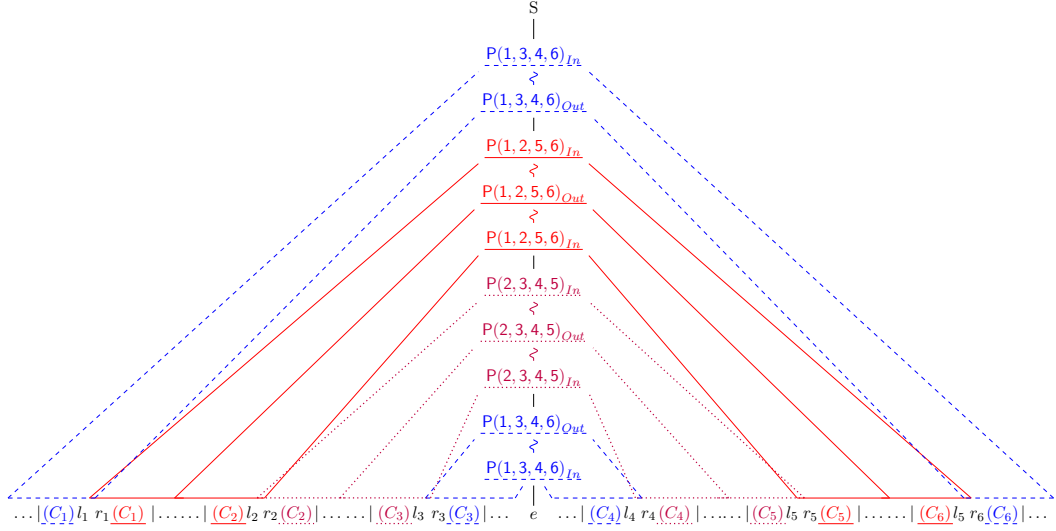
    \begin{proof}
        First, assume that $\Gamma$ can generate $\GG{G}$. Then there is a derived tree
        whose leaves, if read from left to right, yield $\GG{G}$. All derivations of $\Gamma$ start with
        the single initial tree, and then adjoin an execution of the program {\sf P(1,3,4,6)} into it.
        (As {\sf P(1,3,4,6)} is a subroutine, only a full execution can be adjoined.)
        This execution generates some tuple of strings $(\blue{x_1},\blue{x_2}, \blue{x_3}, \blue{x_4})$ and leaves exactly
        the node labeled ${\sf P(1,3,4,6)}_{Out}$ as the sole node marked for adjunction. Therefore,
        in the next step the auxiliary tree rooted with that node will be adjoined, which in turn
        leaves exactly the nodes ${\sf P(1,2,5,6)}_{In}$ and ${\sf P(2,3,4,5)}_{In}$ as nodes
        marked for adjunction. Again, these are input nodes of subroutines, therefore at both nodes
        one (complete) execution of the corresponding programs must be adjoined. The program
        execution of program ${\sf P(1,2,5,6)}$ generates a tuple of strings $(\red{y_1},\red{y_2}, \red{y_3}, \red{y_4})$, and the execution of ${\sf P(2,3,4,5)}$ generates
        $(\purple{z_1}, \purple{z_2}, \purple{z_3}, \purple{z_4})$. The grammar $\Gamma$
        ensures that these tuples will be placed in the order $(\blue{x_1},\red{y_1},\red{y_2},\purple{z_1}, \purple{z_2}, \blue{x_2},\blue{x_3}, \purple{z_3}, \purple{z_4}, \red{y_3},\red{y_4}, \blue{x_4})$, see Figure~\ref{fg:gamma} for a visualization.
        At this point, no more adjunctions are possible, since we explicitly forced ${\sf P(1,2,5,6)}_{Out}$ and ${\sf P(2,3,4,5)}_{Out}$ not to be marked for adjunction. (Also note that this structure is the only possibility to obtain a tree containing no more nodes marked for adjunction.) Hence, $\GG{G}$ can be partitioned as:\[
            \GG{G} = \blue{x_1}\circ\red{y_1}\circ\red{y_2}\circ\purple{z_1}\circ\purple{z_2}\circ\blue{x_2}\circ\blue{x_3}\circ \purple{z_3}\circ\purple{z_4}\circ\red{y_3}\circ\red{y_4}\circ\blue{x_4}.
        \]
        Consider the strings $\blue{x_1}$ and $\red{y_1}$. By the definitions of {\sf P(1,3,4,6)}
        and {\sf P(1,2,5,6)}, and Lemma~\ref{lm:4:combine}, we know that $\blue{x_1}$ must end
        with the terminal symbol $l_1$ and that $\red{y_1}$ must start with the symbol $r_1$.
        Whenever $l_1\,r_1$ occurs in $\GG{G}$, it does so in the string
        \[
            |\;\blue{\CNG{C_1}\;\S\;\CLG{C_1}^R}\;
            l_1\;r_1\;\red{\CLG{C_1}\;\S\; \CLG{C_1}^R}\;
            |,
        \]
        for some $k$-clique $C_1$.
        Since $\blue{x_1}\circ\red{y_1}$ is a substring of $\GG{G}$, and the program {\sf C} cannot produce a
        $|$-terminal, but the ${\sf W}(|)$ part of ${\sf P(\cdot,\cdot,\cdot,\cdot)}$ will always write such a $|$-character,
        $\blue{x_1}$ must have $|\;\blue{\CNG{C_1}\;\S\;\CLG{C_1}^R}\;l_1$ as a suffix and
        $\red{y_1}$ must have $r_1\;\red{\CLG{C_1}\;\S\; \CLG{C_1}^R}\;|$ as a prefix.
        This also means that the program ${\sf C}$ must generate the string between $|$ and $l_1$
        in $\blue{x_1}$ and between $r_1$ and $|$ in $\red{y_1}$.

        Similar statements hold for the other ten strings. In total we obtain that the program {\sf C}
        generates the following tuples for some $k$-cliques $C_1,\ldots,C_6$:
        \begin{itemize}
            \item $t_1 := (\blue{\CNG{C_1}\;\S\;\CLG{C_1}^R},\blue{\CLG{C_3}\;\S\;\CLG{C_3}^R},$\\
                \mbox{}$\qquad \quad\! \blue{\CLG{C_4}\;\S\;\CLG{C_4}^R}, \blue{\CNG{C_6}\;\S\;\CLG{C_6}^R})$
                in {\sf P(1,3,4,6)},
            \item $t_2 := (\red{\CLG{C_1}\;\S\;\CLG{C_1}^R}, \red{\CNG{C_2}\;\S\;\CLG{C_2}^R},$\\
                \mbox{}$\qquad \quad\! \red{\CNG{C_5}\;\S\;\CLG{C_5}^R}, \red{\CLG{C_6}\;\S\;\CLG{C_6}^R})$
                in {\sf P(1,2,5,6)}, and
            \item $t_3 := (\purple{\CLG{C_2}\;\S\;\CLG{C_2}^R},\purple{\CNG{C_3}\;\S\;\CLG{C_3}^R},$\\
                \mbox{}$\qquad \quad\! \purple{\CNG{C_4}\;\S\;\CLG{C_4}^R},\purple{\CLG{C_5}\;\S\;\CLG{C_5}^R})$
                in {\sf P(2,3,4,5)}.
        \end{itemize}

        By Lemma~\ref{lm:4:almostclique}, this implies that all $C_i \cup C_j$ form a $2k$-clique and thus $C_1 \cup \ldots \cup C_6$ forms a $6k$-clique (see Figure~\ref{fg:decomposealmostfourclique} to check that all pairs are covered).

        For the other direction, consider a graph $G$ that contains a $6k$-clique $C^*$. Then we can split $C^*$ into 6
        vertex-disjoint $k$-cliques $C_1, \ldots, C_6$. Further we know that every three of these
        six $k$-cliques together form a $3k$-clique. Thus, the program {\sf C} generates the tuples $t_1,t_2,t_3$ as above.
        We can then use the three programs ${\sf P(\cdot,\cdot,\cdot,\cdot)}$ to generate such tuples
        surrounded with symbols $|$, $l_i$, and $r_i$ at appropriate positions. Adding the surrounding strings by ${\sf A}(T)$ and following the global structure of $\Gamma$
        generates the encoding $\GG{G}$.

        To see that $\Gamma$ is of constant size, note that we only use constantly many programs.
        Thus using a new set of terminal symbols for every instance of a program will still yield
        a constant total number of non-terminal symbols. Further, we only use 19 terminal symbols.
    \end{proof}

    The above lemma and the bound $|\GG{G}| = O(n^{k+1} \log n)$ imply the main theorem.
    \bibliographystyle{abbrv}
    \bibliography{tag}

    \appendix

    \begin{sidewaysfigure}[ht]
        \begin{center}
            \scalebox{0.925}{
                \begin{tikzpicture}
                    \Tree [
                    .S [
                    .\node[color=blue](1){\udash{${\sf P(1,3,4,6)}_{In}$}};
                    \edge[color=blue, decorate, decoration={snake, segment length=6.1pt}]; [
                    .\node[color=blue](2){\udash{${\sf P(1,3,4,6)}_{Out}$}}; [
                    .\node[color=red](3){\uline{${\sf P(1,2,5,6)}_{In}$}};
                    \edge[color=red, decorate, decoration={snake, segment length=6.1pt}]; [
                    .\node[color=red](4){\uline{${\sf P(1,2,5,6)}_{Out}$}};
                    \edge[color=red, decorate, decoration={snake, segment length=6.1pt}]; [
                    .\node[color=red](5){\uline{${\sf P(1,2,5,6)}_{In}$}}; [
                    .\node[color=purple](6){\udot{${\sf P(2,3,4,5)}_{In}$}};
                    \edge[color=purple, decorate, decoration={snake, segment length=6.1pt}];[
                    .\node[color=purple](7){\udot{${\sf P(2,3,4,5)}_{Out}$}}; \edge[color=purple, decorate, decoration={snake, segment length=6.1pt}]; [
                    .\node[color=purple](8){\udot{${\sf P(2,3,4,5)}_{In}$}}; [
                    .\node[color=blue](9){\udash{${\sf P(1,3,4,6)}_{Out}$}}; \edge[color=blue, decorate, decoration={snake, segment length=6.1pt}];[
                    .\node[color=blue](10){\udash{${\sf P(1,3,4,6)}_{In}$}}; [
                    .\node(11) at (0,0) {$e$};
                    ]
                    ]
                    ]
                    ]
                    ]
                    ]
                    ]
                    ]
                    ]
                    ]
                    ]
                    ]
                    \node(12) at (-1.45,-11.49) {$r_3{\color{blue}\udash{$(C_3)$}}|\ldots $};
                    \draw[color=blue, dashed] (12.north east) to (10);
                    \draw[color=blue, dashed] (12.north west) to (9.south west);
                    \draw[color=blue, dashed] (12.north west) to (12.north east);

                    \node(13) at (-3.4,-11.49) {$\ldots |\,{\color{purple}\udot{$(C_3)$}}l_3$};
                    \draw[color=purple, dotted, thick] (13.north east) to (8.south west);
                    \draw[color=purple, dotted, thick] (13.north west) to (7.south west);
                    \draw[color=purple, dotted, thick] (13.north west) to (13.north east);

                    \node(14) at (-5.3,-11.49) {$r_2{\color{purple}\udot{$(C_2)$}}|\ldots $};
                    \draw[color=purple, dotted, thick] (14.north west) to (6.south west);
                    \draw[color=purple, dotted, thick] (14.north west) to (14.north east);

                    \node(15) at (-7.2,-11.49) {$\ldots |\,{\color{red}\uline{$(C_2)$}}l_2$};
                    \draw[color=red] (15.north east) to (5.south west);
                    \draw[color=red] (15.north west) to (4.south west);
                    \draw[color=red] (15.north west) to (15.north east);

                    \node(16) at (-9.1,-11.49) {$r_1{\color{red}\uline{$(C_1)$}}\;|\ldots $};
                    \draw[color=red] (16.north west) to (3.south west);
                    \draw[color=red] (16.north west) to (16.north east);

                    \node(17) at (-11.05,-11.49) {$\ldots |{\color{blue}\udash{$(C_1)$}}l_1$};
                    \draw[color=blue, dashed] (17.north east) to (2.south west);
                    \draw[color=blue, dashed] (17.north west) to (1.south west);
                    \draw[color=blue, dashed] (17.north west) to (17.north east);

                    \node(18) at (1.45,-11.49) {$\ldots |\,{\color{blue}\udash{$(C_4)$}}l_4$};
                    \draw[color=blue, dashed] (18.north west) to (10);
                    \draw[color=blue, dashed] (18.north east) to (9.south east);
                    \draw[color=blue, dashed] (18.north east) to (18.north west);

                    \node(19) at (3.4,-11.49) {$r_4{\color{purple}\udot{$(C_4)$}}\,|\ldots $};
                    \draw[color=purple, dotted, thick] (19.north west) to (8.south east);
                    \draw[color=purple, dotted, thick] (19.north east) to (7.south east);
                    \draw[color=purple, dotted, thick] (19.north east) to (19.north west);

                    \node(20) at (5.2,-11.49) {$\ldots |{\color{purple}\udot{$(C_5)$}}l_5$};
                    \draw[color=purple, dotted, thick] (20.north east) to (6.south east);
                    \draw[color=purple, dotted, thick] (20.north east) to (20.north west);

                    \node(21) at (7.1,-11.49) {$r_5{\color{red}\uline{$(C_5)$}}\,|\ldots $};
                    \draw[color=red] (21.north west) to (5.south east);
                    \draw[color=red] (21.north east) to (4.south east);
                    \draw[color=red] (21.north east) to (21.north west);

                    \node(22) at (9,-11.49) {$\ldots |\,{\color{red}\uline{$(C_6)$}}l_5$};
                    \draw[color=red] (22.north east) to (3.south east);
                    \draw[color=red] (22.north east) to (22.north west);

                    \node(23) at (10.89,-11.49) {$r_6{\color{blue}\udash{$(C_6)$}}|\ldots $};
                    \draw[color=blue, dashed] (23.north west) to (2.south east);
                    \draw[color=blue, dashed] (23.north east) to (1.south east);
                    \draw[color=blue, dashed] (23.north east) to (23.north west);
            \end{tikzpicture}}
        \end{center}
        \caption{\emph{Enlarged version of Figure~\ref{fg:gamma}.} Global structure of a parsing of $\GG{G}$ by $\Gamma$.
        (Clique gadgets are abbreviated.)}\label{fg:gammamaksdn}
    \end{sidewaysfigure}
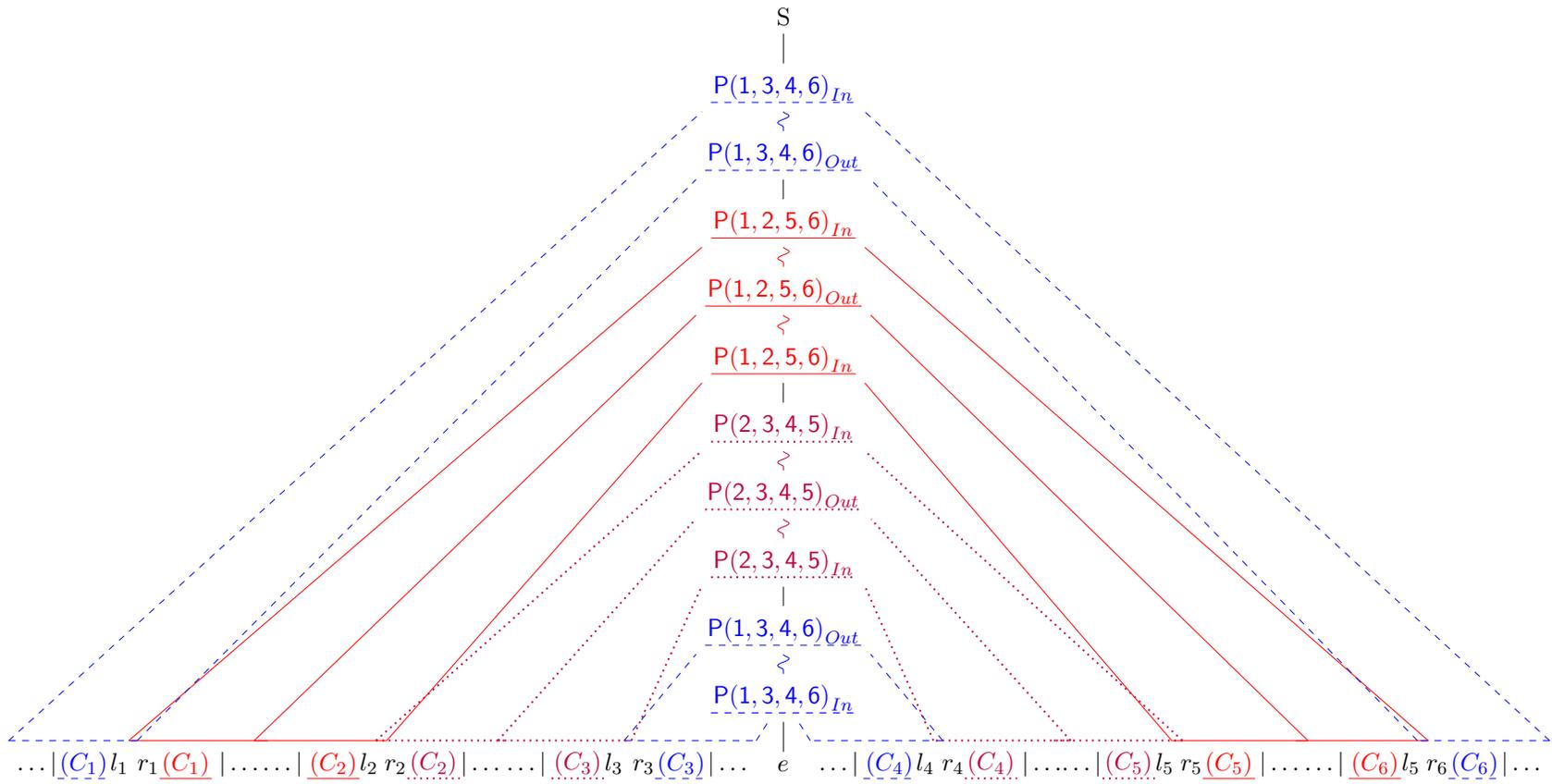

    \end{document}